\documentclass[conference,letter,romanappendices,draft,onecolumn]{ieeeconf}
\let\proof\relax   

\usepackage{amsthm,xpatch}
\usepackage{amsmath,amsfonts}
\usepackage{cite}
\usepackage{amssymb}
\usepackage{dsfont}
\usepackage{graphicx, subfigure}
\usepackage{color}
\usepackage{breqn}
\usepackage{mathtools}
\usepackage{bbm}
\usepackage{latexsym}
\usepackage[ruled, linesnumbered]{algorithm2e}
\usepackage{accents}
\usepackage{tikz}

\newtheorem{lemma}{Lemma}
\newtheorem{theorem}{Theorem}
\newtheorem{remark}{Remark}

\newtheorem{example}{Example}

\newcommand*{\transpose}{%
  {\mathpalette\@transpose{}}%
}

\IEEEoverridecommandlockouts

\begin{document}

\newcommand{\SB}[3]{
\sum_{#2 \in #1}\biggl|\overline{X}_{#2}\biggr| #3
\biggl|\bigcap_{#2 \notin #1}\overline{X}_{#2}\biggr|
}

\newcommand{\Mod}[1]{\ (\textup{mod}\ #1)}

\newcommand{\overbar}[1]{\mkern 0mu\overline{\mkern-0mu#1\mkern-8.5mu}\mkern 6mu}

\makeatletter
\newcommand*\nss[3]{%
  \begingroup
  \setbox0\hbox{$\m@th\scriptstyle\cramped{#2}$}%
  \setbox2\hbox{$\m@th\scriptstyle#3$}%
  \dimen@=\fontdimen8\textfont3
  \multiply\dimen@ by 4             % 4x the default rule thickness
  \advance \dimen@ by \ht0
  \advance \dimen@ by -\fontdimen17\textfont2
  \@tempdima=\fontdimen5\textfont2  % x-height
  \multiply\@tempdima by 4
  \divide  \@tempdima by 5          % 80% of the x-height
  % Modifications are only necessary if the top of the subscript is not that high:
  \ifdim\dimen@<\@tempdima
    \ht0=0pt                        % don't let the subscript interfere
    \@tempdima=\fontdimen5\textfont2
    \divide\@tempdima by 4          % 25% of the x-height
    \advance \dimen@ by -\@tempdima % if >0, add to depth of superscript!
    \ifdim\dimen@>0pt
      \@tempdima=\dp2
      \advance\@tempdima by \dimen@
      \dp2=\@tempdima
    \fi
  \fi
  #1_{\box0}^{\box2}%
  \endgroup
  }
\makeatother

\makeatletter
\renewenvironment{proof}[1][\proofname]{\par
  \pushQED{\qed}%
  \normalfont \topsep6\p@\@plus6\p@\relax
  \trivlist
  \item[\hskip\labelsep
        \itshape
%    #1\@addpunct{.}]\ignorespaces% DELETED
    #1\@addpunct{:}]\ignorespaces% ADDED
}{%
  \popQED\endtrivlist\@endpefalse
}
\makeatother

\makeatletter
\newsavebox\myboxA
\newsavebox\myboxB
\newlength\mylenA

\newcommand*\xoverline[2][0.75]{%
    \sbox{\myboxA}{$\m@th#2$}%
    \setbox\myboxB\null% Phantom box
    \ht\myboxB=\ht\myboxA%
    \dp\myboxB=\dp\myboxA%
    \wd\myboxB=#1\wd\myboxA% Scale phantom
    \sbox\myboxB{$\m@th\overline{\copy\myboxB}$}%  Overlined phantom
    \setlength\mylenA{\the\wd\myboxA}%   calc width diff
    \addtolength\mylenA{-\the\wd\myboxB}%
    \ifdim\wd\myboxB<\wd\myboxA%
       \rlap{\hskip 0.5\mylenA\usebox\myboxB}{\usebox\myboxA}%
    \else
        \hskip -0.5\mylenA\rlap{\usebox\myboxA}{\hskip 0.5\mylenA\usebox\myboxB}%
    \fi}
\makeatother

\xpatchcmd{\proof}{\hskip\labelsep}{\hskip3.75\labelsep}{}{}

\pagestyle{empty}

\title{\fontsize{21}{28}\selectfont Private Computation with Individual and Joint Privacy}

\author{Anoosheh Heidarzadeh and Alex Sprintson\thanks{The authors are with the Department of Electrical and Computer Engineering, Texas A\&M University, College Station, TX 77843 USA (E-mail: \{anoosheh, spalex\}@tamu.edu).}}

\maketitle 

\thispagestyle{empty}

\begin{abstract}
This paper considers the problem of single-server Private Computation (PC) in the presence of Side Information (SI). In this problem, there is a server that stores $K$ i.i.d.~messages, and a user who has a subset of $M$ uncoded messages or a coded linear combination of them as side information, where the identities of these messages are unknown to the server. The user wants to privately compute (via downloading information from the server) a linear combination of a subset of $D$ other messages, where the identities of these messages must be kept private individually or jointly. For each setting, we define the capacity as the supremum of all achievable download rates. 

We characterize the capacity of both PC with coded and uncoded SI when individual privacy is required, for all $K,M,D$. Our results indicate that both settings have the same capacity. 
In addition, we establish a non-trivial lower bound on the capacity of PC with coded SI when joint privacy is required, for a range of parameters $K,M,D$. This lower bound is the same as the lower bound we previously established on the capacity of PC with uncoded SI when joint privacy is required. 
\end{abstract}

\section{introduction}
In this work, we consider the problem of Private Computation (PC) in the presence of side information. In this problem, there is a single (or multiple) remote server(s) storing (identical copies of) a database of i.i.d.~messages; and there is a user who initially has a \emph{side information} about some subset of messages in the database, where the identities of the messages in the support of the side information are initially unknown to the server. The user is interested in privately computing (via downloading information from the server(s)) a linear combination of a different subset of database messages, while minimizing the total amount of information being downloaded from the server(s). 

We consider two different types of side information: (i) \emph{uncoded side information} - where the user knows a subset of database messages, and (ii) \emph{coded side information} - where the user holds a linear combination of a subset of database messages. These settings are referred to as \emph{PC with Side Information (PC-SI)} and \emph{PC with Coded Side Information (PC-CSI)}, respectively. We also consider two different privacy conditions: (i) \emph{individual privacy} - where the identity of each message in the support set of the demanded linear combination needs to be kept private individually, and (ii) \emph{joint privacy} - where the identities of all messages in the support set of the demanded linear combination must be kept private jointly. When the condition~(i) or (ii) is required, we refer to the PC problem as \emph{Individually-Private Computation (IPC)} or \emph{Jointly-Private Computation (JPC)}, respectively. 

The goal is to design a protocol for generating the query of the user and the corresponding answer of the server(s) such that the entropy of the answer is minimized, while the query satisfies the privacy condition.      

Both IPC and JPC settings are related to the problem of Private Computation, introduced in~\cite{SJ2018}, where the goal is to compute a linear combination of the messages in the database, while hiding both the identities and the coefficients of these messages. Several variants of this problem were also studied in~\cite{OK2018,OLRK2018,MM2018,CWJ2018}. These works consider neither individual nor joint privacy, nor any type of side information. 

The JPC-SI setting, initially introduced in~\cite{HS2019PC}, is closely related to the problem of Private Information Retrieval with Side Information (PIR-SI), which was initially introduced in~\cite{Kadhe2017,KGHERS2017} and later extended in several works, e.g.,~\cite{HKGRS:2018,LG:2018,KKHS32019,KHSO2019}. In the PIR-SI problem, a user wishes to retrieve a subset of database messages with the help of an uncoded side information, while achieving joint privacy. Several variants of PIR with different types of side information or different types of privacy conditions were also studied in~\cite{Chen2017side,Maddah2018,Tandon2017,Wei2017CacheAidedPI,Wei2017FundamentalLO,WU2018,HKS2019,KKHS22019}. 
The IPC-SI setting is an extension of the PIR-SI problem when individual privacy is required. This problem, known as IPIR-SI, was introduced in~\cite{HKRS2019}.  
The JPC-CSI ad IPC-CSI settings are two generalizations of PIR with Coded Side Information (PIR-CSI), previously studied in~\cite{HKS2018} and~\cite{KKHS12019}. 

\subsection{Main Contributions}
In this work, we focus on the single-server case. 
For each type of side information (coded or uncoded) and each privacy condition (individual or joint), the \emph{capacity} of the underlying setting is defined as the supremum of all achievable download rates, where the \emph{download rate} is the ratio of the entropy of a message to the entropy of the server's answer. 

We characterize the capacity of both the IPC-SI and IPC-CSI settings, for all parameters. These results subsume several existing results in the PIR literature. 
The converse proof is information-theoretic, and the achievability scheme is a generalization of our recently proposed scheme in~\cite{HKS2019Journal} for the PIR-CSI setting. Our results show that the capacity of both settings are the same. This implies that, when individual privacy is required, having only \emph{one} linear combination of a subset of messages as side information is as efficient as having them all separately. In addition, we establish a non-trivial lower bound on the capacity of the JPC-CSI setting for a range of parameters. Interestingly, this lower bound is the same as the lower bound we previously established in~\cite{HS2019PC} on the capacity of the JPC-SI setting. The proof of achievability is based on a modification of the scheme we proposed in~\cite{HS2019PC} for the JPC-SI setting.   

Our results for both IPC and JPC settings, when compared to the existing results in the PIR literature, indicate that one can privately compute a linear combination of multiple messages much more efficiently than privately retrieving multiple messages, and linearly  combining them locally. In addition, comparing our results with those in~\cite{SJ2018}, one can see that hiding only the identities of the messages (either individually or jointly) and not their coefficients ---which may still provide a satisfactory level of privacy in many applications, can be done much less costly, even when there is only one server and/or the user has no side information.      

\section{Problem Formulation}\label{sec:SN}
Throughout, random variables and their realizations are denoted by bold-face letters and regular letters, respectively.  

Let $\mathbb{F}_q$ be a finite field for a prime $q$, and let $\mathbb{F}_{q^{\ell}}$ be an extension field of $\mathbb{F}_q$ for a positive integer $\ell$. 

Let $K$, $M$, and $D$ be non-negative integers such that ${K\geq M+D}$. Let $\mathcal{K} \triangleq \{1,\dots,K\}$, and let $\mathcal{K}_M$ (or $\mathcal{K}_D$) be the set of all $M$-subsets (or $D$-subsets) of $\mathcal{K}$. Let $\mathcal{C}$ be the set of all nonzero elements in $\mathbb{F}_q$, and let $\mathcal{C}_{M}$ (or $\mathcal{C}_{D}$) be the set of all ordered multisets of $\mathcal{C}$ of size $M$ (or $D$).

Consider a single server that stores a dataset of $K$ messages,  $X_{\mathcal{K}}\triangleq \{X_1,\dots,X_K\}$, where each message $X_i$ is independently and uniformly distributed over $\mathbb{F}_{q^{\ell}}$. That is, ${H(\mathbf{X}_i) = L}$ for ${i\in \mathcal{K}}$, and $H(\mathbf{X}_{\mathcal{K}}) = KL$, where $\mathbf{X}_{\mathcal{K}}\triangleq \{\mathbf{X}_1,\dots,\mathbf{X}_K\}$, and $L \triangleq \ell\log_2 q$. Consider a user that knows a linear combination $Y^{[S,U]}\triangleq \sum_{i\in S} u_i X_i$ of $M$ messages $X_S\triangleq \{X_{i}\}_{i\in S}$ for some ${S\in\mathcal{K}_M}$ and some $U\triangleq \{u_i\}_{i\in S}\in \mathcal{C}_M$, and wishes to retrieve a linear combination $Z^{[W,V]}\triangleq \sum_{i\in W} v_i X_i$ from the server for some $W\in \mathcal{K}_D$, $W\cap S= \emptyset$, and some ${V \triangleq \{v_{i}\}_{i\in W} \in \mathcal{C}_D}$. We refer to $Y^{[S,U]}$ as the \emph{side information}, $X_S$ as the \emph{side information support set}, $S$ as the \emph{side information support index set}, $M$ as the \emph{side information support size}, $Z^{[W,V]}$ as the \emph{demand}, $X_W$ as the \emph{demand support set}, $W$ as the \emph{demand support index set}, and $D$ as the \emph{demand support size}.  

We assume that $\mathbf{S}$, $\mathbf{U}$, and $\mathbf{V}$ are distributed uniformly over $\mathcal{K}_M$, $\mathcal{C}_M$, and $\mathcal{C}_D$, respectively, and $\mathbf{W}$, given $\mathbf{S}=S$, is uniformly distributed over all $W\in \mathcal{K}_D$, $W\cap S=\emptyset$. 
Also, we assume that the server initially knows $M,D$, and the joint distribution of $\mathbf{S}$, $\mathbf{U}$, $\mathbf{W}$, and $\mathbf{V}$, whereas the realizations $S$, $U$, $W$ and $V$ are not initially known to the server. 

For any given $S,U,W,V$, the user sends to the server a query $Q^{[W,V,S,U]}$, which is a (potentially stochastic) function of $W$, $V$, $S$, and $U$, in order to retrieve $Z^{[W,V]}$. For simplifying the notation, we denote $\mathbf{Q}^{[\mathbf{W},\mathbf{V},\mathbf{S},\mathbf{U}]}$ by $\mathbf{Q}$. The query must satisfy one of the following two privacy conditions: 

\begin{itemize}
\item[(i)] \emph{Individual Privacy:} every message in $X_{\mathcal{K}}$ must be equally likely to be in the user's demand support set, i.e., for all $i\in \mathcal{K}$, it must hold that \[\Pr(i\in \mathbf{W}|\mathbf{Q} = Q^{[W,V,S,U]}) = \Pr(i\in \mathbf{W}).\] \item[(ii)] \emph{Joint Privacy:} every $D$-subset of messages in $X_{\mathcal{K}}$ must be equally likely to be the user's demand support set, i.e., for all $W^{*}\in \mathcal{K}_D$, it must hold that \[\Pr(\mathbf{W} = W^{*}|\mathbf{Q} = Q^{[W,V,S,U]}) = \Pr(\mathbf{W} = W^{*}).\] 	
\end{itemize}

The joint privacy, which is a stronger notion of privacy, implies the individual privacy, but not vice versa. The main difference between these two conditions is that for joint privacy the query must protect the correlation between the indices in the demand support index set, whereas for individual privacy some information about this correlation may be leaked, and hence a weaker notion of privacy.  

Neither individual nor joint privacy requires the privacy of the coefficients in the demand to be protected. This is in contrast to the privacy condition being considered in~\cite{SJ2018}, and as a result of this relaxation one can expect more efficient private computation schemes in our settings. In particular, for single-server private computation without any side information, the user must download the entire dataset in order to protect the privacy of both the identities of the messages in the demand support set and their coefficients in the demand~\cite{SJ2018}. However, for neither of the two privacy conditions being considered here the entire dataset needs to be downloaded, even when the user has no side information.  

Upon receiving $Q^{[W,V,S,U]}$, the server sends to the user an answer $A^{[W,V,S,U]}$, which is a (deterministic) function of the query $Q^{[W,V,S,U]}$ and the messages in $X_{\mathcal{K}}$. We denote $\mathbf{A}^{[\mathbf{W},\mathbf{V},\mathbf{S},\mathbf{U}]}$ by $\mathbf{A}$ for the ease of notation. Note that $H(\mathbf{A}| \mathbf{Q},\mathbf{X}_{\mathcal{K}}) = 0$, since $(\mathbf{W},\mathbf{V},\mathbf{S},\mathbf{U})$ and $\mathbf{A}$ are conditionally independent given $(\mathbf{Q},\mathbf{X}_{\mathcal{K}})$. 

The collection of $A^{[W,V,S,U]}$, $Q^{[W,V,S,U]}$, $Y^{[S,U]}$, $W$, $V$, $S$, and $U$ must enable the user to retrieve the demand $Z^{[W,V]}$. That is, it must hold that \[H(\mathbf{Z}^{[\mathbf{W},\mathbf{V}]}| \mathbf{A}, \mathbf{Q}, \mathbf{Y}^{[\mathbf{S},\mathbf{U}]},\mathbf{W},\mathbf{V},\mathbf{S},\mathbf{U})=0.\] We refer to this condition as the \emph{recoverability condition}. 

For each type of privacy, the problem is to design a protocol for generating a query ${Q}^{[W,V,S,U]}$ (and the corresponding answer ${A}^{[W,V,S,U]}$, given ${Q}^{[W,V,S,U]}$ and ${X}_{\mathcal{K}}$) for any given $W,V,S,U$, such that both the privacy and recoverability conditions are satisfied. We refer to this problem as \emph{single-server} \emph{Individually-Private Computation with Coded Side Information (IPC-CSI)} or \emph{Jointly-Private Computation with Coded Side Information (JPC-CSI)}, when individual or joint privacy is required, respectively. 

We similarly define the \emph{IPC-SI} and \emph{JPC-SI} problems for the settings in which the user's side information is the support set $X_S$ itself, instead of a linear combination of the messages in $X_S$. 

We refer to a protocol that generates query/answer for the IPC-CSI or JPC-CSI setting as an \emph{IPC-CSI} or a \emph{JPC-CSI} \emph{protocol}, respectively. The \emph{rate} of an IPC-CSI or a JPC-CSI protocol is defined as the ratio of the entropy of a message, i.e., $L$, to the entropy of the answer $\mathbf{A}$. The \emph{capacity} of the IPC-CSI or JPC-CSI setting is defined as the supremum of rates over all IPC-CSI or JPC-CSI protocols, respectively. An IPC-SI or a JPC-SI protocol, its rate,  and the capacity of the IPC-SI or JPC-SI setting are defined similarly. 

Our goal in this work is to establish lower and/or upper bounds on the capacity of IPC-CSI, JPC-CSI, IPC-SI, and JPC-SI settings, in terms of the parameters $K$, $M$, and $D$.

\section{Main Results}
Our main results for the IPC and JPC settings with both coded and uncoded side information are summarized in Sections~\ref{subsec:IPC} and~\ref{subsec:JPC}, respectively. 

The following two lemmas provide a necessary condition for individual and joint privacy, for both types of side information. The proofs are straightforward by the way of contradiction, and hence omitted for brevity. 

\begin{lemma}[A Necessary Condition for Individual Privacy]\label{lem:1}
For any $i\in\mathcal{K}$, there exist $W^{*}\in\mathcal{K}_D$, ${V^{*}\in\mathcal{C}_D}$, and $S^{*}\in\mathcal{K}_{M}$ where $i\in W^{*}$ and $S^{*}\cap W^{*} = \emptyset$, such that \[H(\mathbf{Z}^{[W^{*},V^{*}]}| \mathbf{A}, \mathbf{Q}, \mathbf{X}_{S^{*}}) = 0.\] 	
\end{lemma}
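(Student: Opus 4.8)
The plan is to derive the statement by contradiction from the individual privacy condition, combined with the recoverability condition. Fix an index $i \in \mathcal{K}$. Suppose, for contradiction, that for \emph{every} choice of $W^{*} \in \mathcal{K}_D$, $V^{*} \in \mathcal{C}_D$, and $S^{*} \in \mathcal{K}_M$ with $i \in W^{*}$ and $S^{*} \cap W^{*} = \emptyset$, we have $H(\mathbf{Z}^{[W^{*},V^{*}]} \mid \mathbf{A}, \mathbf{Q}, \mathbf{X}_{S^{*}}) > 0$. The goal is to show this forces a violation of $\Pr(i \in \mathbf{W} \mid \mathbf{Q} = Q^{[W,V,S,U]}) = \Pr(i \in \mathbf{W})$ for some realization $(W,V,S,U)$.

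**First I would** unpack the recoverability condition. For the actual realization $(W,V,S,U)$, recoverability gives $H(\mathbf{Z}^{[W,V]} \mid \mathbf{A}, \mathbf{Q}, \mathbf{Y}^{[S,U]}, W, V, S, U) = 0$, and since $\mathbf{Y}^{[S,U]}$ is a deterministic function of $\mathbf{X}_S$, this means the user reconstructs $Z^{[W,V]}$ from $(\mathbf{A}, \mathbf{Q}, \mathbf{X}_S)$ when $i \in W$. The key observation is that the conditional answer/query distribution $\mathbf{A}, \mathbf{Q}$ given $\mathbf{Q} = Q$ does not encode which $(W,V,S,U)$ generated $Q$; so if a given query $Q$ can be ``decoded'' against side information $X_S$ to yield $Z^{[W,V]}$ with $i \in W$, then conditioned on $\mathbf{Q} = Q$, the event $i \in \mathbf{W}$ has positive probability. **Next I would** argue the converse direction: if for some query $Q$ occurring with positive probability and some $S^{*}$ disjoint from the relevant index set, \emph{no} linear combination $Z^{[W^{*},V^{*}]}$ with $i \in W^{*}$ is recoverable from $(\mathbf{A}, \mathbf{Q}=Q, \mathbf{X}_{S^{*}})$, then conditioned on $\mathbf{Q}=Q$ the demand support cannot contain $i$ via any realization compatible with side information $S^{*}$ — and by pushing this across all queries in the support (using that the contradiction hypothesis makes it fail for \emph{all} valid $W^{*}, V^{*}, S^{*}$), one concludes $\Pr(i \in \mathbf{W} \mid \mathbf{Q}) = 0$ on a positive-probability set, while $\Pr(i \in \mathbf{W}) = D/K > 0$, contradicting individual privacy.

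**The main obstacle** will be making the decoding/compatibility argument precise: one needs to carefully track that conditioning on $\mathbf{Q} = Q$ alone (without revealing $S,U,W,V$) still lets the user decode, and that the set of recoverable linear combinations from $(\mathbf{A},\mathbf{Q}=Q,\mathbf{X}_{S^{*}})$ is exactly what governs which demands are ``explainable'' by that query under side-information index set $S^{*}$. In other words, the subtle point is relating the information-theoretic quantity $H(\mathbf{Z}^{[W^{*},V^{*}]} \mid \mathbf{A},\mathbf{Q},\mathbf{X}_{S^{*}})$ — an average over queries — to the per-query decodability that the privacy constraint actually sees. This is handled by noting that if this conditional entropy is positive, decoding fails on a positive-probability subset of queries, and the privacy constraint must hold for each such query; a standard averaging/union argument over the (finitely many) choices of $(W^{*},V^{*},S^{*})$ then closes the contradiction. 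Since the paper states the proof is ``straightforward by the way of contradiction,'' I expect the formal write-up to be short, with the bulk of the work being this bookkeeping.
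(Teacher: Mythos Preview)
Your proposal is correct and matches the paper's approach: the paper explicitly states that the proof is ``straightforward by the way of contradiction, and hence omitted for brevity,'' and you have outlined precisely such a contradiction argument, combining recoverability with the individual-privacy condition. You also correctly flag the one point requiring care---relating the averaged conditional entropy $H(\mathbf{Z}^{[W^*,V^*]}\mid \mathbf{A},\mathbf{Q},\mathbf{X}_{S^*})$ to per-query decodability---which is exactly the bookkeeping the omitted proof would need to spell out.
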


\begin{lemma}[A Necessary Condition for Joint Privacy]\label{lem:2}
For any $W^{*}\in\mathcal{K}_D$, there exist $V^{*}\in\mathcal{C}_D$ and $S^{*}\in\mathcal{K}_{M}$ where ${S^{*}\cap W^{*} = \emptyset}$, such that \[H(\mathbf{Z}^{[W^{*},V^{*}]}| \mathbf{A}, \mathbf{Q}, \mathbf{X}_{S^{*}}) = 0.\] 		
\end{lemma}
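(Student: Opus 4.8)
The plan is to argue by contradiction, as the authors suggest: joint privacy should let one exhibit a single positive-probability configuration whose demand support is $W^{*}$, and the recoverability condition for that configuration should then force exactly the conditional entropy we assume positive to vanish. Fix $W^{*}\in\mathcal{K}_D$ and suppose, toward a contradiction, that $H(\mathbf{Z}^{[W^{*},V^{*}]}|\mathbf{A},\mathbf{Q},\mathbf{X}_{S^{*}})>0$ for every $V^{*}\in\mathcal{C}_D$ and every $S^{*}\in\mathcal{K}_M$ with $S^{*}\cap W^{*}=\emptyset$. First I would record that $W^{*}$ is a realizable demand support: since $K\ge M+D$ there is an $M$-subset $S^{\circ}$ of $\mathcal{K}$ disjoint from $W^{*}$, and by the assumed uniform distributions $\Pr(\mathbf{S}=S^{\circ})>0$ while $\Pr(\mathbf{W}=W^{*}\mid\mathbf{S}=S^{\circ})>0$, so on fixing any $V^{\circ}\in\mathcal{C}_D$ and $U^{\circ}\in\mathcal{C}_M$ the configuration $(W^{*},V^{\circ},S^{\circ},U^{\circ})$ occurs with positive probability. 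I would also note that joint privacy is precisely the statement $\mathbf{W}\perp\mathbf{Q}$, so the demand $W^{*}$ remains realizable under every query in the support of $\mathbf{Q}$.

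Next, apply the recoverability condition to this configuration: $H(\mathbf{Z}^{[W^{*},V^{\circ}]}|\mathbf{A},\mathbf{Q},\mathbf{Y}^{[S^{\circ},U^{\circ}]},\mathbf{W}=W^{*},\mathbf{V}=V^{\circ},\mathbf{S}=S^{\circ},\mathbf{U}=U^{\circ})=0$. Since $\mathbf{Y}^{[S^{\circ},U^{\circ}]}=\sum_{i\in S^{\circ}}u^{\circ}_{i}\mathbf{X}_{i}$ is a deterministic function of $\mathbf{X}_{S^{\circ}}$, replacing it by $\mathbf{X}_{S^{\circ}}$ in the conditioning keeps this quantity at $0$; this is the place where ``coded side information is at least as restrictive as uncoded'' is used. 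The structural fact I would then exploit is that the messages are independent of how the query was produced: the query is a function of $(\mathbf{W},\mathbf{V},\mathbf{S},\mathbf{U})$ and the user's private coins only (it does not use the value $\mathbf{Y}^{[S,U]}$), so $\mathbf{X}_{\mathcal{K}}\perp(\mathbf{Q},\mathbf{W},\mathbf{V},\mathbf{S},\mathbf{U})$; hence, conditioned on a fixed value of $\mathbf{Q}$, the joint law of $\mathbf{X}_{\mathcal{K}}$ — and therefore of $\mathbf{A}$ (a deterministic function of $(\mathbf{Q},\mathbf{X}_{\mathcal{K}})$) and of the pair $(\mathbf{Z}^{[W^{*},V^{\circ}]},\mathbf{X}_{S^{\circ}})$ — is the same whether or not one also conditions on $(\mathbf{W},\mathbf{V},\mathbf{S},\mathbf{U})$. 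Peeling off the configuration conditioning in this way would give $H(\mathbf{Z}^{[W^{*},V^{\circ}]}|\mathbf{A},\mathbf{Q},\mathbf{X}_{S^{\circ}})=0$, contradicting the assumption with $V^{*}=V^{\circ}$, $S^{*}=S^{\circ}$. Lemma~\ref{lem:1} is proved the same way, with $\mathbf{W}\perp\mathbf{Q}$ weakened to the individual-privacy consequence $\Pr(i\in\mathbf{W}\mid\mathbf{Q})=\Pr(i\in\mathbf{W})>0$, which still produces a positive-probability configuration whose demand support contains the given index $i$.

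The step I expect to be the main obstacle is exactly this last one: removing the whole configuration $(\mathbf{W},\mathbf{V},\mathbf{S},\mathbf{U})$ from the conditioning. Through the message-independence above, recoverability gives, for each query realization $q$ that is consistent with the fixed configuration $(W^{*},V^{\circ},S^{\circ},U^{\circ})$, that $H(\mathbf{Z}^{[W^{*},V^{\circ}]}|\mathbf{A},\mathbf{Q}=q,\mathbf{X}_{S^{\circ}})=0$; but the unconditioned entropy averages this over the query marginal, which is supported on queries arising from \emph{all} configurations with demand $W^{*}$, not only the one we fixed, so conditioning on the configuration silently reweights the average. Closing this gap is where the real work lies: one wants a single configuration with demand support $W^{*}$ whose induced query distribution already covers the entire support of $\mathbf{Q}$ (so that the per-query vanishing propagates), which one would try to secure by a careful choice of $(V^{\circ},S^{\circ},U^{\circ})$ among the positive-probability configurations with demand $W^{*}$, leaning on joint privacy ($\mathbf{W}\perp\mathbf{Q}$, making the query oblivious to the demand) together with correctness — or, in the weaker form that actually suffices for the converse, one keeps the configuration in the conditioning and argues query-by-query. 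Everything else here is routine information-theoretic bookkeeping, which is presumably why the authors call the argument immediate.
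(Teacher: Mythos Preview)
The paper omits the proof entirely, saying only that it is ``straightforward by the way of contradiction.'' Your contradiction-based outline --- fix $W^{*}$, use joint privacy to exhibit a positive-probability configuration with demand support $W^{*}$, apply recoverability, enlarge the conditioning from $\mathbf{Y}^{[S^{\circ},U^{\circ}]}$ to $\mathbf{X}_{S^{\circ}}$, and strip the configuration conditioning via the independence of $\mathbf{X}_{\mathcal{K}}$ from $(\mathbf{Q},\mathbf{W},\mathbf{V},\mathbf{S},\mathbf{U})$ --- is exactly the intended argument, and you have supplied considerably more detail than the paper does.

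The subtlety you flag is genuine: a single fixed $(V^{\circ},S^{\circ},U^{\circ})$ need not generate every query in the support of $\mathbf{Q}$, so the per-query vanishing you obtain from recoverability does not automatically average to the unconditional statement $H(\mathbf{Z}^{[W^{*},V^{\circ}]}\mid\mathbf{A},\mathbf{Q},\mathbf{X}_{S^{\circ}})=0$. What the argument \emph{does} cleanly deliver is the query-wise version: for every $q$ in the support of $\mathbf{Q}$, joint privacy gives $\Pr(\mathbf{W}=W^{*}\mid\mathbf{Q}=q)>0$, hence some configuration $(W^{*},V_q,S_q,U_q)$ consistent with $q$, and then your independence step yields $H(\mathbf{Z}^{[W^{*},V_q]}\mid\mathbf{A},\mathbf{Q}=q,\mathbf{X}_{S_q})=0$. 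This per-query form is in fact all that the downstream converse (Lemma~\ref{lem:IPC-Conv}) requires, since every inequality there is already conditioned on $\mathbf{Q}$; your second proposed fix --- keep the configuration in the conditioning and argue query-by-query --- is the standard route and closes the gap. The lemma as literally stated is slightly informal on this point, but you have correctly isolated both the issue and its resolution.
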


Thinking of scalar-linear IPC or JPC protocols ---where the answer consists only of scalar-linear combinations of the messages in $X_{\mathcal{K}}$, the necessary conditions in Lemmas~\ref{lem:1} and~\ref{lem:2} imply the need for linear codes that satisfy certain combinatorial requirements. (Recently, in~\cite{KHSO2019}, we made a similar connection between single-server PIR with side information and locally recoverable codes.) Consider a (linear) code of length $K$ that satisfies the following requirement: for any $i\in \{1,\dots,K\}$, there is a codeword of (Hamming) weight $D$ or $M+D$ (or at least $D$ and at most $M+D$) whose support includes the index $i$. The parity-check equations of the dual of any such code can be used for constructing a scalar-linear IPC-CSI (or IPC-SI) protocol. Minimizing the entropy of the answer in order to maximize the rate of the protocol translates into minimizing the dimension of the code. In this work, we design optimal codes with minimum dimension for all $K,M,D$ for the IPC-CSI setting. These codes naturally serve also as optimal codes for the IPC-SI setting. 

The problem of designing a scalar-linear JPC-CSI (or JPC-SI) protocol reduces to the problem of designing a code of length $K$ with minimum dimension satisfying the following requirement: for any $D$-subset $W\subseteq \{1,\dots,K\}$, there is a codeword of weight $D$ or $M+D$ (or at least $D$ and at most $M+D$) whose support includes the $D$-subset $W$. The design of optimal codes satisfying this requirement remains an open problem. In~\cite{HS2019PC}, we initiated the study of the JPC-SI setting, and established a non-trivial upper bound on the dimension of optimal codes for this setting. In this work, we make the first attempt towards characterizing the dimension of optimal codes for the JPC-CSI setting; and provide a non-trivial upper bound for a range of parameters $K,M,D$.

\subsection{IPC-SI and IPC-CSI}\label{subsec:IPC}
The capacity of IPC-SI and IPC-CSI for arbitrary $K,M,D$ are characterized in Theorems~\ref{thm:IPC-SI} and~\ref{thm:IPC-CSI}, respectively. 

\begin{theorem}\label{thm:IPC-SI}
For the IPC-SI setting with $K$ messages, side information of size $M$, and demand support size $D$, the capacity is given by ${\lceil \frac{K}{M+D} \rceil}^{-1}$. 	
\end{theorem}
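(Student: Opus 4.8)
The plan is to prove the two matching bounds: a converse showing that every IPC-SI protocol satisfies $H(\mathbf{A})\geq \lceil K/(M+D)\rceil\, L$, and an achievability scheme meeting it with equality. Throughout write $N\triangleq\lceil K/(M+D)\rceil$.

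\smallskip
\emph{Converse.} I would start from Lemma~\ref{lem:1} and first upgrade it to a ``local recovery from $M+D-1$ messages'' statement: for every $i\in\mathcal{K}$ there is a set $B_i\subseteq\mathcal{K}\setminus\{i\}$ with $|B_i|=M+D-1$ such that $H(\mathbf{X}_i\mid\mathbf{A},\mathbf{Q},\mathbf{X}_{B_i})=0$. Indeed, taking $W^{*},V^{*},S^{*}$ as in Lemma~\ref{lem:1} with $i\in W^{*}$, the combination $\mathbf{Z}^{[W^{*},V^{*}]}=\sum_{j\in W^{*}}v^{*}_{j}\mathbf{X}_j$ is a function of $(\mathbf{A},\mathbf{Q},\mathbf{X}_{S^{*}})$, and since $v^{*}_i\neq 0$ one can solve for $\mathbf{X}_i$ once $\{\mathbf{X}_j\}_{j\in W^{*}\setminus\{i\}}$ is also available; so $B_i\triangleq(W^{*}\cup S^{*})\setminus\{i\}$ works. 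Next I would select indices $i_1,\dots,i_N$ greedily, taking $i_t$ to be any element of $\mathcal{K}\setminus(\{i_1,\dots,i_{t-1}\}\cup\bigcup_{s<t}B_{i_s})$; this set is nonempty for every $t\leq N$ because its complement has size at most $(t-1)+(t-1)(M+D-1)=(t-1)(M+D)<K$, using $N-1<K/(M+D)$. Put $\Lambda\triangleq\{i_1,\dots,i_N\}$ and $\Gamma\triangleq(\bigcup_t B_{i_t})\setminus\Lambda$. The greedy rule forces $i_{t'}\notin B_{i_t}$ for all $t'>t$, so $B_{i_t}\cap\Lambda\subseteq\{i_1,\dots,i_{t-1}\}$ and hence $B_{i_t}\subseteq\{i_1,\dots,i_{t-1}\}\cup\Gamma$; recovering $\mathbf{X}_{i_1},\dots,\mathbf{X}_{i_N}$ in this order from $(\mathbf{A},\mathbf{Q},\mathbf{X}_\Gamma)$ gives $H(\mathbf{X}_\Lambda\mid\mathbf{A},\mathbf{Q},\mathbf{X}_\Gamma)=0$. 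Since $\Lambda\cap\Gamma=\emptyset$ and the messages are i.i.d.\ and independent of $\mathbf{Q}$, this yields $H(\mathbf{A})\geq H(\mathbf{A}\mid\mathbf{Q},\mathbf{X}_\Gamma)\geq I(\mathbf{A};\mathbf{X}_\Lambda\mid\mathbf{Q},\mathbf{X}_\Gamma)=H(\mathbf{X}_\Lambda\mid\mathbf{Q},\mathbf{X}_\Gamma)=NL$, so the rate, hence the capacity, is at most $1/N$.

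\smallskip
\emph{Achievability.} I would give a linear scheme of rate $1/N$. Its backbone is a ``partition-and-code'' construction: when $\mathcal{K}$ can be partitioned into $N$ parts each of size in $\{D,\dots,M+D\}$ (which holds whenever $ND\leq K$), the user picks such a partition together with a distinguished part $P^{*}$ containing $W$ with $P^{*}\setminus W\subseteq S$, chooses $P^{*}$ with probability proportional to $|P^{*}|$, and randomizes everything else (which indices of $S$ sit in $P^{*}$, how $\mathcal{K}\setminus P^{*}$ splits into the other parts, and all nonzero coefficients) uniformly. The answer is the tuple of $N$ part-sums $\sum_{i\in P_j}c_i\mathbf{X}_i$ with $c_i=v_i$ for $i\in W$, which are $N$ independent nonzero combinations over disjoint supports, so $H(\mathbf{A})=NL$. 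Recoverability is immediate: from the $P^{*}$-sum the user subtracts $\sum_{i\in P^{*}\setminus W}c_i\mathbf{X}_i$, which it knows from $X_S$, leaving $Z^{[W,V]}$. Individual privacy holds because the size-proportional choice of $P^{*}$ together with the uniform placement of $W$ inside it makes $\Pr(i\in\mathbf{W}\mid\mathbf{Q})=D/K=\Pr(i\in\mathbf{W})$ for every $i$ (using $\binom{n-1}{D-1}/\binom{n}{D}=D/n$), while the rest of the query is, given the partition, independent of $(\mathbf{W},\mathbf{V},\mathbf{S})$. When $ND>K$ (which forces $M<D$, so that $\mathcal{K}$ admits no partition with all parts of size $\geq D$), I would replace the disjoint blocks by a suitably symmetrized family of rate-$1/N$ codes whose revealed description never pins $W$ down to a set smaller than $\mathcal{K}$; this is the generalization of the PIR-CSI scheme of~\cite{HKS2019Journal}, and it is in any case subsumed by the achievability of Theorem~\ref{thm:IPC-CSI}, since a user holding $X_S$ can form $Y^{[S,U]}$ and run any IPC-CSI protocol.

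\smallskip
\emph{Where the difficulty lies.} The converse is routine once the reduction ``every $\mathbf{X}_i$ is recoverable from $\mathbf{A},\mathbf{Q}$ and $M+D-1$ messages'' is set up. The subtle part is achievability in the regime $ND>K$: there no disjoint partition with eligible parts exists, any rate-$1/N$ retrieval code necessarily contains a low-weight codeword whose support is a superset of $W$, and one must randomize so that, although such structure is disclosed, every index remains equally likely to be the demanded one; arranging the disclosed generators to be spread out and verifying the resulting per-index uniformity while keeping $H(\mathbf{A})=NL$ is the real work.
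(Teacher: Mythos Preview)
Your converse is correct and is essentially the paper's argument in Lemma~\ref{lem:IPC-Conv}: both greedily invoke Lemma~\ref{lem:1} to peel off one ``unit'' of entropy per $M+D$ indices. The only cosmetic difference is that you first solve for the individual message $\mathbf{X}_i$ (by additionally conditioning on $\mathbf{X}_{W^{*}\setminus\{i\}}$, so $|B_i|=M+D-1$), whereas the paper keeps the recovered object as the linear combination $\mathbf{Z}_k=\mathbf{Z}^{[W_k,V_k]}$ and conditions only on $\mathbf{X}_{S_k}$. Both bookkeeping choices yield the same bound $H(\mathbf{A})\geq\lceil K/(M+D)\rceil L$.

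For achievability the paper does not use your variable-size disjoint partition; instead it builds the \emph{Generalized Modified Partition-and-Code} (GMPC) protocol directly for IPC-CSI: $n=\lceil K/(M+D)\rceil$ blocks $I_1,\dots,I_n$ each of size exactly $M+D$, with $I_1$ and $I_n$ overlapping in $m=n(M+D)-K$ positions, the distinguished block chosen with a carefully tuned two-level randomization (probabilities $\alpha,\beta$ depending on $D,m,r$), and a single coefficient multiset $Q''$ reused across all blocks. Your ``size-proportional $P^{*}$'' partition with parts in $\{D,\dots,M+D\}$ is a valid and somewhat simpler IPC-SI scheme in the regime $ND\leq K$, but it does not extend to IPC-CSI (since with coded side information the distinguished block must carry exactly $Y^{[S,U]}+Z^{[W,V]}$, forcing $|P^{*}|=M+D$); the paper's overlap-and-$\beta$ trick is precisely what makes one construction cover both settings and all $K,M,D$ simultaneously. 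Your fallback --- reduce IPC-SI to IPC-CSI by having the user form $Y^{[S,U]}$ and run the CSI protocol --- is exactly the reduction the paper uses (``any IPC-CSI protocol is an IPC-SI protocol''), so your appeal to Theorem~\ref{thm:IPC-CSI} for the remaining regime is legitimate, but be aware that the actual work for that regime lives in the explicit GMPC construction rather than in an unspecified ``symmetrized family of codes.''
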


\begin{theorem}\label{thm:IPC-CSI}
For the IPC-CSI setting with $K$ messages, side information support size $M$, and demand support size $D$, the capacity is given by ${\lceil \frac{K}{M+D} \rceil}^{-1}$. 	
\end{theorem}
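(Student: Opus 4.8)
\emph{Proof plan.} Write $n\triangleq\lceil K/(M+D)\rceil$. The statement splits into a converse, $H(\mathbf A)\ge nL$ for every IPC-CSI protocol, and a matching achievability. For the converse the plan is to turn the single recovery guarantee of Lemma~\ref{lem:1} into a chain of recoveries that pins down $n$ messages. Since $\mathbf X_{\mathcal K}$ is independent of $\mathbf Q$,
\[
H(\mathbf A)\ \ge\ H(\mathbf A\mid\mathbf Q)\ \ge\ I(\mathbf A;\mathbf X_{\mathcal K}\mid\mathbf Q)\ =\ KL-H(\mathbf X_{\mathcal K}\mid\mathbf A,\mathbf Q),
\]
so it suffices to produce a set $J\subseteq\mathcal K$ with $|J|\ge n$ and $H(\mathbf X_J\mid\mathbf A,\mathbf Q,\mathbf X_{\mathcal K\setminus J})=0$. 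I would build $J$ greedily: start with $G_0=J_0=\emptyset$; at step $t$ pick any $i_t\in\mathcal K\setminus G_{t-1}$, apply Lemma~\ref{lem:1} to $i_t$ to obtain $W_t\ni i_t$ and $S_t$ with $W_t\cap S_t=\emptyset$, $|W_t|=D$, $|S_t|=M$, and $H(\mathbf Z^{[W_t,V_t]}\mid\mathbf A,\mathbf Q,\mathbf X_{S_t})=0$, and set $G_t=G_{t-1}\cup W_t\cup S_t$, $J_t=J_{t-1}\cup\{i_t\}$. Since $\mathbf Z^{[W_t,V_t]}$ is a combination of $\mathbf X_{W_t}$ with \emph{nonzero} coefficients, $\mathbf X_{i_t}$ is a deterministic function of $(\mathbf A,\mathbf Q,\mathbf X_{(W_t\cup S_t)\setminus\{i_t\}})$; combined with the inductive hypothesis $H(\mathbf X_{G_{t-1}}\mid\mathbf A,\mathbf Q,\mathbf X_{G_{t-1}\setminus J_{t-1}})=0$ (first recover $\mathbf X_{J_{t-1}}$, hence all of $\mathbf X_{G_{t-1}}$, and then $\mathbf X_{i_t}$) this gives $H(\mathbf X_{G_t}\mid\mathbf A,\mathbf Q,\mathbf X_{G_t\setminus J_t})=0$. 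Because $|G_t|$ is strictly increasing the process ends with $G_T=\mathcal K$ for some $T\le K$, and $|G_t|\le t(M+D)$ forces $T\ge K/(M+D)$, hence $T\ge n$; take $J=J_T$. (The converse can alternatively be inherited from Theorem~\ref{thm:IPC-SI} once one notes that an IPC-CSI protocol is a fortiori an IPC-SI protocol.)

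For achievability I would exhibit a scalar-linear IPC-CSI protocol whose answer consists of $n$ independent $\mathbb F_q$-linear combinations of $X_1,\dots,X_K$, so that $H(\mathbf A)=nL$ and the rate is $1/n$. The construction generalizes the PIR-CSI scheme of~\cite{HKS2019Journal} from $D=1$ to arbitrary $D$ and from retrieval to computation: on input $(W,V,S,U)$ the user draws, from a carefully symmetrized distribution, a cover $B_1,\dots,B_n$ of $\{1,\dots,K\}$ by index sets of sizes in $\{D,\dots,M+D\}$ with $B_1=W\cup S$, and asks the server for one linear combination supported on each $B_j$; on $B_1$ the requested coefficients are $v_i$ for $i\in W$ and $u_i$ for $i\in S$, so that subtracting $Y^{[S,U]}$ from the answer on $B_1$ leaves exactly $Z^{[W,V]}$, while the remaining $n-1$ combinations carry freshly randomized nonzero coefficients and act as decoys. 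The points to check are recoverability (immediate from the coefficient matching on $B_1$); individual privacy, i.e.\ $\Pr(i\in\mathbf W\mid\mathbf Q=Q)=\Pr(i\in\mathbf W)$ for all $i$, which follows by choosing the random cover and coefficients so that, conditioned on the query, every coordinate is equally likely to lie in the demand support; and linear independence of the $n$ combinations, so that $H(\mathbf A)=nL$ exactly (keeping the $B_j$ as disjoint as possible and, when overlaps are forced because $(M+D)\nmid K$, choosing the coefficients accordingly). Since an IPC-CSI protocol also serves as an IPC-SI protocol, this simultaneously reproves the achievability half of Theorem~\ref{thm:IPC-SI}.

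The converse is routine once Lemma~\ref{lem:1} is available; the substance lies in the achievability, and there the two nontrivial points are (a) designing the randomized cover so that the number of combinations stays exactly $n$ while every coordinate --- including those forced into a smaller last block when $(M+D)\nmid K$, and in the corner cases $M=0$ or $D$ comparable to $K$ --- lies in a requested combination of weight $D$ or $M+D$, and (b) choosing the coefficient distribution so that the weight-$(M+D)$ combination restricts on $S$ to a scalar multiple of $U$ \emph{and} the induced query distribution is symmetric enough for exact individual privacy. Point (b), the cancellation of the \emph{coded} side information, is the genuinely new ingredient relative to IPC-SI and is precisely where the scheme of~\cite{HKS2019Journal} must be generalized.
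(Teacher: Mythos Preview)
Your proposal is correct and follows essentially the same route as the paper: the converse iterates Lemma~\ref{lem:1} to extract $n=\lceil K/(M+D)\rceil$ independent recoverable combinations (your bookkeeping via $H(\mathbf X_J\mid\mathbf A,\mathbf Q,\mathbf X_{\mathcal K\setminus J})=0$ is equivalent to the paper's direct chain-rule lower bound, and your parenthetical reduction to the IPC-SI converse is exactly how the paper argues it), and the achievability is precisely the paper's GMPC protocol, a randomized-partition generalization of the PIR-CSI scheme of~\cite{HKS2019Journal}. The paper's concrete realization uses $n$ blocks all of size exactly $M{+}D$ (with $I_1$ and $I_n$ overlapping in $m=n(M{+}D)-K$ indices) and a \emph{single shared} coefficient vector across all blocks rather than fresh random ones on the decoys, together with carefully tuned selection probabilities $\alpha,\beta$ to equalize $\Pr(i\in\mathbf W\mid\mathbf Q)$---exactly the details you flag as nontrivial points~(a) and~(b).
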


For the converse proof, we use information-theoretic arguments relying primarily on the result of Lemma~\ref{lem:1}, to upper bound the rate of any IPC-SI protocol (see Section~\ref{subsec:IPC-Conv}). This upper bound obviously holds for any IPC-CSI protocol. For the proof of achievability, we construct a new scalar-linear IPC-CSI protocol, termed the \emph{Generalized Modified Partition-and-Code (GMPC) protocol}, which achieves the rate upper bound (see Section~\ref{subsec:IPC-Ach}). This protocol naturally serves also as an IPC-SI protocol. The GMPC protocol is based on the idea of non-uniform randomized partitioning, and generalizes our recently proposed protocol in~\cite{HKS2019Journal} for the PIR-CSI setting. Examples of the GMPC protocol are given in Section~\ref{sec:ex}.

\begin{remark}\label{rem:1}
\emph{The matching capacity of the IPC-SI and IPC-CSI settings shows that achieving individual privacy comes at no loss in capacity if the user has only \emph{one} random linear combination of $M$ random messages, instead of $M$ random messages separately as their side information.}	
\end{remark}

\begin{remark}\label{rem:2}
\emph{As shown in~\cite{HKRS2019}, for the IPIR-SI setting, the normalized download cost of $K-M\lfloor\frac{K}{M+D}\rfloor$ or $D\lceil \frac{K}{M+D}\rceil$ (depending on $K,M,D$) is achievable, where the \emph{normalized download cost} is defined as the download cost normalized by the entropy of a message. Comparing this with the result of Theorem~\ref{thm:IPC-SI}, one can see that, when individual privacy is required, one can privately compute a linear combination of multiple messages much more efficiently than retrieving them privately and linearly combining them locally.}	
\end{remark}

\begin{remark}\label{rem:3}
\emph{For the case of $M=0$, the capacity of both IPC-SI and IPC-CSI settings is equal to $\lceil\frac{K}{D}\rceil^{-1}$. Depending on the value of $D$, the capacity can be substantially larger than $\frac{1}{K}$, which was shown to be the capacity of single-server private computation where the privacy of both the demand support index set and the coefficients in the demand must be preserved~\cite{SJ2018}. For the case of $D=1$, both the IPC-SI and IPC-CSI problems reduce to the problems of PIR-SI~\cite{Kadhe2017} and PIR-CSI where the demanded message does not lie in the support of the side information~\cite{HKS2018}, respectively. The capacity of these settings were shown to be equal to $\lceil\frac{K}{M+1}\rceil^{-1}$, matching the results of Theorems~\ref{thm:IPC-SI} and~\ref{thm:IPC-CSI}.}	
\end{remark}

\subsection{JPC-SI and JPC-CSI}\label{subsec:JPC}
Theorem~\ref{thm:JPC-SI} lower bounds the capacity of JPC-SI for all $K,M,D$, and Theorem~\ref{thm:JPC-CSI} establishes a lower bound on the capacity of JPC-CSI for some values of $K,M,D$. 

\begin{theorem}[{\cite{HS2019PC}}]\label{thm:JPC-SI}
For the JPC-SI setting with $K$ messages, side information of size $M$, and demand support size $D$, the capacity is lower bounded by $(\lceil \frac{K-M-D}{\lfloor M/D\rfloor+1}\rceil+1)^{-1}$. 		
\end{theorem}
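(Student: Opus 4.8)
The statement is a lower bound on capacity, so the plan is to exhibit a scalar‑linear JPC‑SI protocol whose answer consists of exactly $\rho:=\lceil\frac{K-M-D}{\lfloor M/D\rfloor+1}\rceil+1$ symbols of $\mathbb{F}_{q^{\ell}}$; then $H(\mathbf{A})=\rho L$ and the rate is $1/\rho$, which is the claimed bound. By the reduction to linear codes spelled out just before the theorem, such a protocol amounts to choosing, as a (randomized) function of the realization $(W,V,S)$, a generator matrix $G\in\mathbb{F}_q^{\rho\times K}$ — which is (encoded in) the query — and returning $A=GX$. Writing $\mathcal{D}$ for the row space of $G$, the recoverability condition becomes: $\mathcal{D}+\mathrm{span}(e_i:i\in S)$ contains a vector agreeing with the demand vector $v$ on $W$ and vanishing off $W\cup S$; and joint privacy becomes: the distribution of $G$ satisfies $\Pr(\mathbf{W}=W^{*}\mid\mathbf{Q}=G)=\Pr(\mathbf{W}=W^{*})$ for every $D$‑subset $W^{*}$.

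Set $t:=\lfloor M/D\rfloor$. The construction is a randomized partition‑and‑code protocol, generalizing the partition‑based schemes used for PIR‑SI: apply a uniformly random relabeling of $[K]$; designate one block to hold $W\cup S$, from which a single linear combination $\sum_{i\in W}v_iX_i+\sum_{i\in S}(\cdot)X_i$ is requested, so the user cancels the (known) side‑information part and recovers $Z^{[W,V]}$; partition the remaining $K-M-D$ indices into $\lceil\frac{K-M-D}{t+1}\rceil$ blocks of size at most $t+1$ and request one generic linear combination of the messages in each such block, using an MDS‑style coding inside blocks where needed to absorb the residual $r:=M-tD$ side‑information messages. Recoverability is immediate from the demand block. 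For the combinatorial condition implied by Lemma~\ref{lem:2}, observe that any $D$‑subset $W^{*}$ meets at most $D$ of the size‑$(t+1)$ blocks, and the union of those blocks has size at most $D(t+1)=Dt+D\le M+D$, so $W^{*}$ lies in the support of a codeword of $\mathcal{D}$ of weight at most $M+D$; this is exactly what makes every $W^{*}$ a priori consistent with the query.

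The hard part will be joint privacy — upgrading ``every $W^{*}$ is consistent'' to ``the posterior on $W^{*}$ is exactly uniform''. The obstacle is the familiar one from the PIR‑SI literature: in the plain randomized partition a $D$‑subset that coincides with a union of blocks admits strictly more consistent side‑information sets than a generic $D$‑subset, so its posterior is too large. The fix is the \emph{modified} partition‑and‑code: with a small, precisely chosen probability one uses an alternative block‑size profile — equivalently, one places the residual indices slightly differently and reallocates the $r$ leftover side‑information messages — tuned so that the over‑counting in one branch is cancelled by an under‑counting in the other. Concretely, the plan for this step is: (i) for a fixed query realization $G$, enumerate all tuples $(W^{*},S^{*},V^{*})$ that could have produced it, each weighted by its prior probability $\Pr(W^{*},S^{*},V^{*})$; (ii) carry out this weighted count for both branches of the mixture; (iii) choose the mixing probability and the block sizes so that the total is independent of $W^{*}$, checking the identity separately in the boundary regimes $t+1\nmid K-M-D$ (short last block), $D\nmid M$ (nonzero residual $r$), and $K=M+D$ (a single downloaded symbol). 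I expect step (iii) and its case analysis to be the bulk of the work; once it is completed, $H(\mathbf{A})=\rho L$ gives rate $1/\rho$ and the theorem follows.
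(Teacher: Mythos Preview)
Your construction has a genuine gap at the combinatorial step, before you ever get to the posterior‑balancing mixture. In your scheme the answer rows have \emph{disjoint} supports: one ``demand block'' of size $M+D$ carrying $W\cup S$, and $\rho-1$ small blocks of size at most $t+1$. Since the blocks are disjoint, every codeword in the row space of $G$ has support equal to a union of blocks. Now take any $D$‑subset $W^{*}$ that has at least one index in the demand block and at least one index in some small block. The smallest union of blocks containing $W^{*}$ already has size at least $(M+D)+1>M+D$, so there is \emph{no} codeword of weight at most $M+D$ whose support contains $W^{*}$; equivalently, no $S^{*}$ with $|S^{*}|=M$ satisfies the condition of Lemma~\ref{lem:2}. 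Hence $\Pr(\mathbf{W}=W^{*}\mid\mathbf{Q}=G)=0$ for that particular query realization, and joint privacy fails outright. Your sentence ``any $D$‑subset $W^{*}$ meets at most $D$ of the size‑$(t+1)$ blocks, and the union of those blocks has size at most $D(t+1)\le M+D$'' silently assumes $W^{*}$ avoids the big block; that assumption is exactly what breaks. A mixture over block‑size profiles cannot repair this: for any fixed realization $G$ with the disjoint structure, the straddling $W^{*}$'s have posterior zero, not merely the wrong nonzero value.

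The paper's proof (imported from~\cite{HS2019PC}) avoids this by abandoning disjoint partitions. Its PC‑IA protocol forms $n=\rho$ overlapping supports $Q_i=B_1\cup B_{1+i}$ sharing a common block $B_1$, and uses a Cauchy‑type coefficient pattern so that the $B_1$‑part of every $A_i$ is a scalar multiple of one fixed vector. Consequently, for any two indices lying in different $B_{1+i}$'s one can linearly combine the corresponding $A_i$'s to cancel the entire $B_1$ block, producing a codeword of weight exactly $M+D$ whose support contains the chosen pair (and more generally any $D$‑subset). This interference‑alignment step is the missing idea; a disjoint modified partition‑and‑code of the IPC flavor cannot achieve joint privacy here.
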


\begin{theorem}\label{thm:JPC-CSI}
For the JPC-CSI setting with $K$ messages, side information support size $M$, and demand support size $D$, the capacity is lower bounded by ${(\frac{K-M-D}{\lfloor M/D\rfloor+1}+1)^{-1}}$ if $\lfloor\frac{M}{D}\rfloor+1$ divides $K-M-D$. \end{theorem}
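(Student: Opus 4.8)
\medskip
\noindent\textbf{Proof plan.}
The bound is a lower bound on capacity, so the plan is to exhibit a single JPC-CSI protocol of rate $\bigl(\tfrac{K-M-D}{\lfloor M/D\rfloor+1}+1\bigr)^{-1}$. I would build a scalar-linear protocol whose answer is exactly $t:=\tfrac{K-M-D}{\lfloor M/D\rfloor+1}+1$ symbols of $\mathbb{F}_{q^{\ell}}$, so that $H(\mathbf{A})\le tL$ and the rate is at least $1/t$. Writing $m:=\lfloor M/D\rfloor$, the hypothesis $(m+1)\mid(K-M-D)$ makes $t$ an integer; note that, since $M+D-(M\bmod D)=D(m+1)$, it is equivalent to $(m+1)\mid\bigl(K-(M\bmod D)\bigr)$, which is the form that makes the block structure below fit exactly. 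By the reduction from scalar-linear protocols to codes recorded after Lemma~\ref{lem:2}, it suffices to (a) build a length-$K$, dimension-$t$ linear code $\mathcal{C}$ with the \emph{covering property} --- for every $W^{*}\in\mathcal{K}_D$ there is a codeword $c\in\mathcal{C}$ with $W^{*}\subseteq\mathrm{supp}(c)$ and $D\le|\mathrm{supp}(c)|\le M+D$ --- and (b) randomize $\mathcal{C}$, by coordinate permutations and diagonal rescalings, so that the induced query distribution is jointly private.

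For (a) I would adapt the JPC-SI construction of~\cite{HS2019PC}. Split $\mathcal{K}$ into one \emph{demand block} $\mathcal{B}_{0}$ of size $M+D$ and $t-1=\tfrac{K-M-D}{m+1}$ \emph{noise blocks} of size $m+1$ each --- this is where $(m+1)\mid(K-M-D)$ is first used. Let $\mathcal{C}$ be spanned by $t$ vectors: one supported on $\mathcal{B}_{0}$ with generic, pairwise-distinct nonzero entries (the codeword that will carry the demand), and, for each noise block, one vector supported on that noise block together with a designated sub-block of $\mathcal{B}_{0}$, chosen so that the codewords of $\mathcal{C}$ of weight at most $M+D$ cover every $D$-subset, in particular those that straddle $\mathcal{B}_{0}$ and one or more noise blocks. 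The one genuinely new ingredient, compared with JPC-SI, is forced by coded side information: since the user holds only the single combination $Y^{[S,U]}=\sum_{i\in S}u_iX_i$, the codeword used for recovery must have support \emph{exactly} $W\cup S$ (weight exactly $M+D$), with its restriction to $S$ proportional to $U$. In JPC-SI one instead has the slack of any support $W\cup S'$ with $S'\subseteq S$ (weight anywhere in $\{D,\dots,M+D\}$), and it is precisely that slack which lets~\cite{HS2019PC} round the block count up to a ceiling and dispense with a divisibility hypothesis; losing it is what costs us the divisibility assumption here.

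The protocol itself, together with recoverability and the rate, is then routine. Given $(W,V,S,U)$, the user applies a random monomial transformation of $\mathcal{C}$ carrying $\mathcal{B}_{0}$ onto $W\cup S$ and rescaled so that the distinguished codeword, evaluated at $X_{\mathcal{K}}$, equals $Z^{[W,V]}+\gamma\,Y^{[S,U]}$ for a uniformly random nonzero $\gamma$; the query announces the transformed code, the server returns a basis of it evaluated at $X_{\mathcal{K}}$, and the user subtracts $\gamma\,Y^{[S,U]}$ to recover $Z^{[W,V]}$. The answer is $t$ symbols of $\mathbb{F}_{q^{\ell}}$, so the rate is $1/t$.

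The hard part will be (b), joint privacy. By Bayes' rule and the uniformity of the priors on $\mathbf{S},\mathbf{U},\mathbf{W},\mathbf{V}$, it is enough to show that $\Pr(\mathbf{Q}=Q\mid\mathbf{W}=W^{*})$ is the same for every $W^{*}\in\mathcal{K}_D$; in particular it must be positive for every such $W^{*}$, i.e.\ every query the protocol can emit must be explainable by every demand. That is exactly what the covering property of $\mathcal{C}$ provides, and making it precise is where the work concentrates: one must verify that each $D$-subset $W^{*}$ --- including the straddling ones --- lies in the support of a codeword of weight at most $M+D$ of every monomial image of $\mathcal{C}$, and that this support can always be parsed as $W^{*}\cup S^{*}$ for a legitimate side-information set $S^{*}$. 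The divisibility hypothesis re-enters here, ensuring that the block sizes line up so that no $D$-subset overflows the weight budget $M+D$. Granting the covering property, the argument closes by a counting identity as in~\cite{HS2019PC}: the number of monomial transformations turning the canonical code into a query explainable by $W^{*}$ is independent of $W^{*}$, so the common value of the posterior is $\binom{K}{D}^{-1}$. I expect nearly all of the difficulty to sit in this last step --- choosing $\mathcal{C}$ so that the covering property genuinely holds and the ensemble is exactly symmetric --- while recoverability and the rate are immediate.
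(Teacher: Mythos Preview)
Your proposal is correct and matches the paper's approach: the paper states (without a detailed proof) that the result follows from the observation that the PC-IA protocol of~\cite{HS2019PC}, with a slight modification of the answer coefficients, already serves as a JPC-CSI protocol precisely when the divisibility condition holds. Your code-theoretic framing is a faithful re-parametrization of PC-IA, and you correctly isolate the one new ingredient --- that coded side information forces the recovery codeword to have support exactly $W\cup S$ (weight exactly $M+D$) rather than any subset thereof --- which is exactly why the divisibility hypothesis is needed and why, as the paper notes via Example~\ref{ex:JPC-2}, PC-IA fails as a JPC-CSI protocol otherwise.
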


The capacity lower bound in Theorem~\ref{thm:JPC-SI} is achievable by a scalar-linear JPC-SI protocol, called \emph{Partition-and-Code with Interference Alignment (PC-IA)}, which we recently proposed in~\cite{HS2019PC}. The PC-IA protocol is applicable for all $K,M,D$, and relies on the idea of a probabilistic partitioning that allows the parts to overlap and have multiple blocks of interference that are aligned (for details, see~\cite{HS2019PC}). 

Theorem~\ref{thm:JPC-CSI}, which appears without proof, follows directly from a simple observation that the PC-IA protocol (with a slight modification in the choice of coefficients in the linear combinations that constitute the server's answer to the user's query) serves also as a scalar-linear JPC-CSI protocol for some values of $K,M,D$, particularly when the divisibility condition in the theorem's statement holds; however, the PC-IA protocol is not a JPC-CSI protocol in general. 
Examples of the PC-IA protocol for both cases are given in Section~\ref{sec:ex}. 

We have been able to design different scalar-linear JPC-CSI protocols for some other values of $K,M,D$; but the constructions are not universal and are limited to specific values of $K,M,D$, and hence not presented in this work. The extension of these constructions to arbitrary $K,M,D$ is a challenging open problem, and the focus of an ongoing work. 

\begin{remark}\label{rem:4}
\emph{As was shown in~\cite{HS2019PC}, when joint privacy is required, with the help of an uncoded side information the download cost for the private computation of one linear combination of multiple messages can be much lower than that of privately retrieving multiple messages and computing the linear combination of them. For instance, for $K$ even, when the user has $M=2$ messages as side information, for privately computing a linear combination of $D=2$ messages the normalized download cost is equal to $\frac{K}{2}-1$ (see Theorem~\ref{thm:JPC-SI}); whereas, privately retrieving $D=2$ messages incurs a normalized download cost of $\min\{{K-2},{K-\lfloor\frac{K}{3}\rfloor}\}$, which is significantly higher than $\frac{K}{2}-1$ (see~\cite[Theorem~2]{HKGRS:2018}). Surprisingly, the result of Theorem~\ref{thm:JPC-CSI} shows that for some values of $K,M,D$ (e.g., $K$ even and $M=D=2$), only \emph{one} linear combination of $M$ messages suffices to achieve the same normalized download cost (e.g., $\frac{K}{2}-1$). This is interesting because regardless of the values of $M$ and $D$,  when joint privacy is required, with the help of only one linear combination of $M$ messages the normalized download cost for retrieving $D$ messages is equal to $K-1$, which is much higher than, for instance, $\frac{K}{2}-1$.}	
\end{remark}

\begin{remark}\label{rem:5}
\emph{The capacity lower bounds in Theorems~\ref{thm:JPC-SI} and~\ref{thm:JPC-CSI} are tight for the cases of $D=1$ and $M=0$ (see~\cite{Kadhe2017,HKS2018}). We have been able to prove the tightness of these bounds for small values of $K,M,D$, particularly for $M=D=2$ and several values of $K$. Nevertheless, it remains open whether these lower bounds are tight for all $K,M,D$ in general.}
\end{remark}

\begin{remark}\label{rem:6}
\emph{The matching capacity lower bounds in Theorems~\ref{thm:JPC-SI} and~\ref{thm:JPC-CSI} raises an intriguing question whether, similar to the IPC-SI and IPC-CSI settings, the capacity of the JPC-SI and JPC-CSI settings are the same. We conjecture that the answer is affirmative for both linear and non-linear protocols.}
\end{remark}

\section{Proofs of Theorems~\ref{thm:IPC-SI} and~\ref{thm:IPC-CSI}}\label{sec:ProofsIPC}
Since any IPC-CSI protocol is an IPC-SI protocol, for the converse we only need to upper bound the rate of any IPC-SI protocol, whereas for the achievability it suffices to design an IPC-CSI protocol that achieves the rate upper bound. 

\subsection{Converse}\label{subsec:IPC-Conv}
\begin{lemma}\label{lem:IPC-Conv}
The rate of any IPC-SI protocol for $K$ messages, side information of size $M$, and demand support size $D$, is upper bounded by ${\lceil \frac{K}{M+D} \rceil}^{-1}$.	
\end{lemma}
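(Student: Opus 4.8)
The plan is to bound the entropy of the server's answer $H(\mathbf{A})$ from below by $\lceil K/(M+D)\rceil \cdot L$, which immediately yields the claimed rate upper bound. Set $N \triangleq \lceil K/(M+D)\rceil$. I would start from Lemma~\ref{lem:1}: for every index $i\in\mathcal{K}$ there exist $W^{*}\ni i$, $V^{*}$, and $S^{*}$ with $S^{*}\cap W^{*}=\emptyset$ such that $H(\mathbf{Z}^{[W^{*},V^{*}]}\mid \mathbf{A},\mathbf{Q},\mathbf{X}_{S^{*}})=0$. The key observation is that $\mathbf{Z}^{[W^{*},V^{*}]}=\sum_{j\in W^{*}} v^{*}_j \mathbf{X}_j$ together with $\mathbf{X}_{S^{*}}$ lets us peel off some messages: informally, once $\mathbf{X}_{S^{*}}$ is known, the answer $\mathbf{A}$ determines a nontrivial linear combination of the $D$ messages in $W^{*}$, so we gain ``one message worth'' of entropy per $W^{*}$ while ``spending'' at most $M+D$ indices (those in $S^{*}\cup W^{*}$).

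The core of the argument is an iterative/greedy decoding chain. I would pick a maximal sequence of triples $(W^{*}_1,V^{*}_1,S^{*}_1),\dots,(W^{*}_t,V^{*}_t,S^{*}_t)$, chosen via Lemma~\ref{lem:1}, whose supports $T_k \triangleq S^{*}_k\cup W^{*}_k$ are pairwise disjoint. Since each $T_k$ has size at most $M+D$ and they are disjoint subsets of $\mathcal{K}$, and since by maximality every remaining index is ``blocked'' by the ones already chosen, one shows $t\geq \lceil K/(M+D)\rceil = N$ (this is essentially a covering/packing count: if $t<N$ then $t(M+D)<K$, leaving an index $i$ outside $\bigcup_k T_k$, and Lemma~\ref{lem:1} applied to $i$ would produce a triple with support of size $\le M+D$ that can be added, possibly after discarding its overlap — care is needed here, see below). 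Then I expand
\[
H(\mathbf{A}\mid \mathbf{Q}) \;\ge\; H(\mathbf{A}\mid \mathbf{Q},\mathbf{X}_{S^{*}_1},\dots,\mathbf{X}_{S^{*}_t})
\]
is the wrong direction, so instead I use a telescoping bound: $H(\mathbf{A}\mid\mathbf{Q}) \ge I(\mathbf{A};\mathbf{X}_{W^{*}_1},\dots,\mathbf{X}_{W^{*}_t}\mid \mathbf{Q}, \mathbf{X}_{S^{*}_1\cup\cdots\cup S^{*}_t})$, and expand this mutual information message-block by message-block. At stage $k$, conditioned on $\mathbf{Q}$, all side-information messages, and the demand supports $W^{*}_1,\dots,W^{*}_{k-1}$ already recovered, Lemma~\ref{lem:1} gives that $\mathbf{A}$ reveals the linear combination $\mathbf{Z}^{[W^{*}_k,V^{*}_k]}$; since the $\mathbf{X}_j$ are i.i.d.\ uniform over $\mathbb{F}_{q^\ell}$ and $W^{*}_k$ is disjoint from everything conditioned on, this linear combination carries exactly $L$ fresh bits of entropy, contributing $L$ to the sum. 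Summing over $k=1,\dots,t$ gives $H(\mathbf{A})\ge H(\mathbf{A}\mid\mathbf{Q}) \ge tL \ge NL$, hence rate $\le 1/N$.

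The main obstacle I anticipate is making the disjointness/maximality step fully rigorous while respecting the constraint $S^{*}_k\cap W^{*}_k=\emptyset$ (which Lemma~\ref{lem:1} guarantees) but \emph{not} $S^{*}_k\cap W^{*}_j=\emptyset$ for $j\neq k$, nor disjointness of the $S^{*}_k$ from each other. One cannot simply demand the chosen triples have globally disjoint supports and still guarantee $t\ge N$; the cleaner route is to fix a set $\mathcal{I}\subseteq\mathcal{K}$ of ``fresh'' indices of size $N$ that are pairwise at distance $>M+D$ in the sense that no single $(M+D)$-subset contains two of them — concretely take $\mathcal{I}=\{1, M+D+1, 2(M+D)+1,\dots\}$ — apply Lemma~\ref{lem:1} to each $i\in\mathcal{I}$ to get $(W^{*}_i,V^{*}_i,S^{*}_i)$, and then argue that the recovered combinations $\{\mathbf{Z}^{[W^{*}_i,V^{*}_i]}\}_{i\in\mathcal{I}}$ are ``jointly informative'': because each $W^{*}_i\ni i$ and the $i$'s are spread out, no $\mathbf{Z}^{[W^{*}_i,V^{*}_i]}$ lies in the span of the previously-conditioned messages and combinations, so each adds a full $L$ to the entropy of $\mathbf{A}$. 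Verifying this independence claim — that conditioning on $\mathbf{Q}$, on all the side-information sets $\mathbf{X}_{S^{*}_i}$, and on the earlier $\mathbf{X}_{W^{*}_j}$ still leaves $\mathbf{X}_i$ (and hence $\mathbf{Z}^{[W^{*}_i,V^{*}_i]}$) with conditional entropy $L$ — is where the i.i.d.\ uniformity of the messages and the spacing of $\mathcal{I}$ must be combined carefully, and is the crux of the converse.
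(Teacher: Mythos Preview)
Your overall architecture is the right one and matches the paper: repeatedly invoke Lemma~\ref{lem:1} to extract, at each step, a linear combination worth $L$ bits of fresh entropy from $\mathbf{A}$, and count that $\lceil K/(M+D)\rceil$ such steps are possible. The paper's proof does exactly this telescoping, writing
\[
H(\mathbf{A})\ \ge\ H(\mathbf{Z})+\sum_{k=1}^{n-1} H(\mathbf{Z}_k)\ =\ nL,
\]
after conditioning successively on $\mathbf{X}_{S},\mathbf{Z},\mathbf{X}_{S_1},\mathbf{Z}_1,\dots$.

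However, your proposed way of picking the ``fresh'' indices has a real gap. Fixing $\mathcal{I}=\{1,M{+}D{+}1,2(M{+}D){+}1,\dots\}$ in advance and then applying Lemma~\ref{lem:1} to each $i\in\mathcal{I}$ gives you no control over where the resulting $W^{*}_i$ and $S^{*}_i$ land: Lemma~\ref{lem:1} only promises $i\in W^{*}_i$ and $S^{*}_i\cap W^{*}_i=\emptyset$, nothing more. In particular, some other $i'\in\mathcal{I}$ may well belong to $S^{*}_i$ (or to $W^{*}_i$), so once you condition on $\mathbf{X}_{S^{*}_i}$ you may have already revealed $\mathbf{X}_{i'}$, and then $\mathbf{Z}^{[W^{*}_{i'},V^{*}_{i'}]}$ need not carry any new entropy. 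The ``spacing'' of $\mathcal{I}$ is irrelevant here; two indices being far apart numerically does not prevent them from sitting in the same $(M{+}D)$-set produced by Lemma~\ref{lem:1}. So the independence claim you flag as ``the crux'' is not just delicate---it is false under your construction.

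The fix is to choose the fresh index \emph{adaptively}, which is precisely what the paper does. At step $k$, after the sets $W,S,W_1,S_1,\dots,W_{k-1},S_{k-1}$ have been produced, their union has size at most $k(M{+}D)$; since $k\le n-1<K/(M{+}D)$, there is an index $i_k$ outside this union. Apply Lemma~\ref{lem:1} to $i_k$ to get $(W_k,V_k,S_k)$ with $i_k\in W_k$ and $i_k\notin S_k$. Now $i_k$ lies in none of the sets $S,W,S_1,W_1,\dots,S_{k-1},W_{k-1},S_k$, so $\mathbf{X}_{i_k}$ is independent of everything conditioned on, and since $\mathbf{X}_{i_k}$ appears with a nonzero coefficient in $\mathbf{Z}_k$, one gets $H(\mathbf{Z}_k\mid \mathbf{Q},\mathbf{X}_S,\mathbf{Z},\mathbf{X}_{S_1},\mathbf{Z}_1,\dots,\mathbf{X}_{S_k})=L$. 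No disjointness of the $(W_k\cup S_k)$'s is needed---only that each $i_k$ is fresh relative to the union of all previously used index sets. With this adaptive choice the telescoping goes through verbatim and yields $H(\mathbf{A})\ge \lceil K/(M{+}D)\rceil\,L$.
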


\begin{proof}
To prove the lemma, we need to show that $H(\mathbf{A})\geq \lceil\frac{K}{M+D}\rceil L$. Take arbitrary $W\in \mathcal{K}_D$, $V\in \mathcal{C}_D$, $S\in \mathcal{K}_M$ such that $S\cap W = \emptyset$. By a simple application of the chain rule of entropy, one can show that
\begin{equation}\label{eq:CSIlineI20}
H(\mathbf{A})\geq H(\mathbf{Z})+H(\mathbf{A}|\mathbf{Q},\mathbf{X}_S,\mathbf{Z}),	
\end{equation} where $\mathbf{Z}\triangleq \mathbf{Z}^{[W,V]}$. Note that $H(\mathbf{Z})=L$. We consider two cases: (i) $W \cup S = \mathcal{K}$, and (ii) $W \cup S\neq \mathcal{K}$. In the case (i), we have $M = K-D$, and $\lceil \frac{K}{M+D}\rceil L = L$; and hence,~\eqref{eq:CSIlineI20} implies that $H(\mathbf{A})\geq H(\mathbf{Z}) = L$, as was to be shown. 

In the case (ii), we further lower bound $H(\mathbf{A}|\mathbf{Q},\mathbf{X}_S,\mathbf{Z})$ as follows. Choose an arbitrary message, say $\mathbf{X}_{i_1}$, for some $i_1\not\in W\cup S$. By the result of Lemma~\ref{lem:1}, there exist ${W_1\in \mathcal{K}_D}$, $i_1\in W_1$, $V_1\in \mathcal{C}_D$, and $S_1\in \mathcal{K}_M$, ${S_1\cap W_1 = \emptyset}$, such that $H(\mathbf{Z}_1 |\mathbf{A},\mathbf{Q},\mathbf{X}_{S_1}) = 0$, or in turn, $H(\mathbf{Z}_1 |\mathbf{A},\mathbf{Q},\mathbf{X}_S,\mathbf{Z},\mathbf{Z}_1) = 0$, where $\mathbf{Z}_1\triangleq \mathbf{Z}^{[W_1,V_1]}$. Thus, 
\begin{align}
H(\mathbf{A}|\mathbf{Q},\mathbf{X}_S,\mathbf{Z}) &\geq 
H(\mathbf{A}|\mathbf{Q},\mathbf{X}_S,\mathbf{Z},\mathbf{X}_{S_1}) \nonumber\\ 
& \quad +H(\mathbf{Z}_1|\mathbf{A},\mathbf{Q},\mathbf{X}_S,\mathbf{Z},\mathbf{X}_{S_1})\nonumber \\
&=H(\mathbf{Z}_{1}|\mathbf{Q},\mathbf{X}_S,\mathbf{Z},\mathbf{X}_{S_1})\nonumber \\ 
& \quad +H(\mathbf{A}|\mathbf{Q},\mathbf{X}_S,\mathbf{Z},\mathbf{X}_{S_1},\mathbf{Z}_{1})\nonumber\\ 
& = H(\mathbf{Z}_{1})\nonumber \\ & \quad +H(\mathbf{A}|\mathbf{Q},\mathbf{X}_S,\mathbf{Z},\mathbf{X}_{S_1},\mathbf{Z}_{1})\label{eq:CSIlineI21}
\end{align} where $\mathbf{Z}_{1}$ and $(\mathbf{Q},\mathbf{X}_S,\mathbf{Z},\mathbf{X}_{S_1})$ are independent because $i_1\in W_1$ and ${i_1\not\in {W}\cup {S}\cup {S}_1}$. 
Let $n \triangleq \lceil\frac{K}{M+D}\rceil$. Using Lemma~\ref{lem:1} recursively, it can be shown that for all ${1\leq k<n}$ there exist $i_1, \dots,i_k \in \mathcal{K}$, ${W}_1,\dots,{W}_{k}\in \mathcal{K}_D$, $V_1,\dots,V_k\in \mathcal{C}_D$, and ${S_1,\dots,S_{k}\in \mathcal{K}_M}$ satisfying ${i_l \in W_l}$, ${S}_l\cap {W}_l = \emptyset$, ${i_l\not\in \cup_{j=1}^{l-1} ({W}_{j}\cup {S}_{j}) \cup ({W}\cup {S})}$ for all $1\leq l\leq k$, such that \[H(\mathbf{Z}_{k}|\mathbf{A},\mathbf{Q},\mathbf{X}_S,\mathbf{Z},\mathbf{X}_{S_1},\mathbf{Z}_{1},\dots,\mathbf{X}_{S_{k-1}},\mathbf{Z}_{k-1},\mathbf{X}_{S_k})=0,\] where $\mathbf{Z}_l \triangleq \mathbf{Z}^{[W_l,V_l]}$ for all $1\leq l\leq k$. Obviously, $\left|\cup_{j=1}^{k-1} ({W}_j \cup {S}_j) \cup ({W}\cup {S})\right|\leq (M+D)k$ for all ${1\leq k< n}$. Applying the same technique as in~\eqref{eq:CSIlineI21}, it can then be shown that for all $1\leq k< n$, we have
\begin{align*}
& H(\mathbf{A}|\mathbf{Q},\mathbf{X}_S,\mathbf{Z},\mathbf{X}_{S_1},\mathbf{Z}_{1},\dots,\mathbf{X}_{S_{k-1}},\mathbf{Z}_{k-1})\nonumber \\ 
& \quad \geq H(\mathbf{Z}_k) +H(\mathbf{A}|\mathbf{Q},\mathbf{X}_S,\mathbf{Z},\mathbf{X}_{S_1},\mathbf{Z}_{1},\dots,\mathbf{X}_{S_k},\mathbf{Z}_{k}).
\end{align*} Putting together these lower bounds for all $k$, we have
\begin{align}
H(\mathbf{A}|\mathbf{Q},\mathbf{X}_S,\mathbf{Z}) &\geq \sum_{k=1}^{n-1} H(\mathbf{Z}_k) = (n-1)L, \label{eq:CSIlineI22}
\end{align} since $\mathbf{Z}_1,\dots,\mathbf{Z}_{n-1}$ are independent by the choice of $i_1,\dots,i_{n-1}$ in the construction. 
Combining~\eqref{eq:CSIlineI20} and~\eqref{eq:CSIlineI22}, we have $H(\mathbf{A})\geq nL = \lceil\frac{K}{M+D}\rceil L$, as was to be shown.	
\end{proof}

\subsection{Achievability}\label{subsec:IPC-Ach}
For the ease of notation, we define $n\triangleq \lceil\frac{K}{M+D}\rceil$, ${m\triangleq n(M+D)-K}$, and ${r \triangleq M+D-m}$. 

\vspace{0.25cm}
\textbf{Generalized Modified Partition-and-Code (GMPC) Protocol:} This protocol consists of three steps as follows: 

\emph{Step~1:} Let $I_l \triangleq \{{(l-1)(M+D)+1},\dots,l(M+D)\}$ for $1\leq l< n$, and let $I_n\triangleq \{1,\dots,m,(n-1)(M+D)+1,\dots,K\}$. Note that $I_1\cap I_n = \{1,\dots,m\}$. 

First, the user constructs a random permutation $\pi$ on $\mathcal{K}$ as follows. With probability $\alpha\triangleq \frac{m+2r}{K}$, the user chooses $l^{*}\in\{1,n\}$ uniformly at random; otherwise, with probability $1-\alpha$, the user randomly chooses $l^{*}\in\{2,\dots,n-1\}$. 

If $l^{*}\in \{1,n\}$, with probability $\beta$ (or $1-\beta$) where the choice of $\beta$ will be specified shortly, the user assigns $\mu\triangleq\min\{D,m\}$ (or $D-\rho\triangleq D-\min\{D,r\}$)  randomly chosen indices from $W$ and $m-\mu$ (or $m-D+\rho$) randomly chosen indices from $S$ to $\{\pi(j): 1\leq j\leq m\}$ at random, and randomly assigns the rest of the indices in $W\cup S$ to $\{\pi(j): j\in I_{l^{*}}\setminus \{1,\dots,m\}\}$. Otherwise, if $l^{*}\in \{2,\dots,n-1\}$, the user randomly assigns the $M+D$ indices in $W\cup S$ to $\{\pi(j): j\in I_{l^{*}}\}$. Then, the user assigns the (not-yet-assigned) indices in $\mathcal{K}\setminus (W\cup S)$ to $\{\pi(j): j\not\in I_{l^{*}}\}$. 

The value of $\beta$, which is carefully chosen in order to satisfy the individual privacy condition, depends on the values of $D, m, r$: 
\[\beta \triangleq 
\begin{cases}
\frac{m}{m+2r}, &  D\leq m, D\leq r,\\
\frac{D}{m+2r}, &  D> m, D\leq r,\\
1-\frac{2D}{m+2r}, &  D\leq m, D> r,\\
\frac{r}{M}\left(1- \frac{2D}{m+2r}\right), &  D> m, D> r.\\
\end{cases}
\]

Next, the user constructs $n$ ordered sets $Q'_1,\dots,Q'_n$, each of size $M+D$, defined as $Q'_k \triangleq \{\pi(j): j\in I_l\}$; and constructs an ordered multiset $Q''$ of size $M+D$, defined as $Q'' \triangleq \{c_{j}: j\in I_{l^{*}}\}$ where $c_{j}= v_{\pi(j)}$ or $c_{j}= u_{\pi(j)}$ when $\pi(j)\in W$ or $\pi(j)\in S$, respectively. Recall that $v_{\pi(j)}$ or $u_{\pi(j)}$ is the coefficient of the message $X_{\pi(j)}$ in the user's demand or side information, respectively. 

The user then constructs $Q_l = (Q'_l,Q'')$ for ${1\leq l\leq n}$, and sends the query $Q^{[W,V,S,U]} = \{Q_{1},\dots,Q_{n}\}$ to the server. 

\emph{Step~2:} By using $Q_l=(Q'_l,Q'')$'s, the server computes $A_l$'s, defined as $A_{l} \triangleq \sum_{j=1}^{M+D} c_{i_j} X_{i_j}$ where $Q'_{l} = \{i_1,\dots,i_{M+D}\}$ and $Q'' = \{c_{i_1},\dots,c_{i_{M+D}}\}$, and sends the answer $A^{[W,V,S,U]}=\{A_{1},\dots,A_{n}\}$ back to the user.

\emph{Step~3:} Upon receiving the server's answer, the user retrieves the demand $Z^{[W,V]}$ by subtracting off the contribution of the side information $Y^{[S,U]}$ from $A_{l^{*}}=Z^{[W,V]}+Y^{[S,U]}$.

\begin{lemma}\label{lem:IPC-Ach}
The GMPC protocol is a scalar-linear IPC-CSI protocol, and achieves the rate ${\lceil \frac{K}{M+D} \rceil}^{-1}$.	
\end{lemma}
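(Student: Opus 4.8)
The plan is to verify, in sequence, that the protocol's random choices are always feasible, that recoverability holds, that the rate equals $n^{-1}=\lceil K/(M+D)\rceil^{-1}$, and finally --- the substantive part --- that individual privacy holds. Feasibility is cardinality bookkeeping: from $m=n(M+D)-K$ and $r=M+D-m$ one checks $|I_l|=M+D$ for all $l$, $I_1\cap I_n=\{1,\dots,m\}$, that in each branch of the $l^{*}\in\{1,n\}$ step the prescribed numbers of demand and side-information indices fit into the overlap and the private positions (e.g.\ $0\le m-\mu\le M$ and $0\le D-\mu\le r$), and that $\alpha,\beta\in[0,1]$; the degenerate case $n=1$ (i.e.\ $K=M+D$) is treated separately, since there the protocol reduces to revealing the $M+D$ coefficients and every $D$-subset of $\mathcal K$ is consistent. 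Recoverability is immediate: $\pi$ maps $W\cup S$ onto the block $I_{l^{*}}$ and $Q''$ carries exactly the coefficients $v_{\pi(j)}$ for $\pi(j)\in W$ and $u_{\pi(j)}$ for $\pi(j)\in S$, so $A_{l^{*}}=Z^{[W,V]}+Y^{[S,U]}$, and the user recovers $Z^{[W,V]}$ by subtracting the known $Y^{[S,U]}$ (here $l^{*}$ is determined by $(\mathbf Q,\mathbf W,\mathbf S)$ as the unique block equal to $W\cup S$). Since $\mathbf A=(A_1,\dots,A_n)$ is a tuple of $n$ symbols of $\mathbb F_{q^{\ell}}$, $H(\mathbf A)\le nL$, so the rate is at least $n^{-1}$; together with Lemma~\ref{lem:IPC-Conv} the rate is exactly $n^{-1}$.

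For individual privacy, since $\mathbf S,\mathbf U,\mathbf V$ are independent and uniform and $\mathbf W\mid\mathbf S$ is uniform over the $D$-subsets disjoint from $S$, the prior over valid tuples $(W,V,S,U)$ is uniform, so by Bayes the posterior given $\mathbf Q=Q$ is proportional to the likelihood $\Pr(\mathbf Q=Q\mid W,V,S,U)$. The query $Q$ determines both the permutation $\pi$ and the multiset $Q''$, and a tuple is consistent with $Q$ exactly when $W\cup S$ equals one of the blocks $P_l:=\pi(I_l)$ --- call its index $l^{*}$ --- in which case $U,V$ are forced by $Q''$ and automatically nonzero. Computing the likelihood, one finds it depends only on whether $l^{*}$ is a middle block (where it is a fixed constant) or $l^{*}\in\{1,n\}$ (where it is nonzero only if the overlap-occupancy $w:=|W\cap\pi(\{1,\dots,m\})|$ equals $\mu$ or $D-\rho$ --- the only two values the non-uniform partitioning ever produces --- with weights $\beta$ and $1-\beta$). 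The key simplification is the identity $\binom{m}{\mu}\binom{r}{D-\mu}\big/\bigl(\binom{D}{\mu}\binom{M}{m-\mu}\bigr)=m!\,r!/(M!\,D!)$, valid for every admissible $\mu$ because $r-D+\mu=M-m+\mu$; it collapses all the block-internal counting.

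Summing these likelihoods over all consistent tuples, the normalizing denominator simplifies --- via that identity and $m+2r=K-(n-2)(M+D)$ --- to $1/(M!\,D!\,(K-M-D)!)$, with no dependence on $\beta$; and since $\Pr(i\in\mathbf W)=D/K$, the goal becomes $\Pr(i\in\mathbf W\mid\mathbf Q=Q)=D/K$ for every $i$. Splitting on the location of $i$ --- in a middle block, in the overlap $\pi(\{1,\dots,m\})$, or in the private part of $P_1$ or of $P_n$ --- the three resulting conditions are, respectively, $1-\alpha=(n-2)(M+D)/K$ (true by the definition of $\alpha$), $\beta\mu+(1-\beta)(D-\rho)=mD/(m+2r)$, and $\beta(D-\mu)+(1-\beta)\rho=2rD/(m+2r)$. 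The last two are complementary --- their sum is the trivial identity $D=D$ --- so only the middle one is a genuine constraint, and one checks that the four-branch formula for $\beta$ is precisely its solution in each of the four regimes cut out by whether $D\le m$ (giving $\mu=D$ or $\mu=m$) and whether $D\le r$ (giving $D-\rho=0$ or $D-\rho=D-r$), treating the sub-case $\mu=D-\rho$ (in which the two branches coincide and $\beta$ is immaterial) on its own.

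I expect the individual-privacy step to be the main obstacle: modelling the likelihoods of the non-uniform randomized partitioning correctly --- above all, determining exactly which overlap-occupancies arise and with what probability --- spotting the collapsing binomial identity, and keeping the three-location-by-four-regime bookkeeping organized so that everything reduces to the single scalar equation that the piecewise $\beta$ was engineered to satisfy. The feasibility, recoverability, and rate arguments are routine by comparison.
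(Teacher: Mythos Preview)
Your proposal is correct and follows essentially the same plan as the paper: recoverability and rate are immediate from the construction, and the privacy argument is a three-way case split on the position of $i$ (in the overlap $\{1,\dots,m\}$, in the private part of an edge block, or in a middle block), checked across the four regimes determined by the signs of $D-m$ and $D-r$. The only difference is one of packaging: the paper computes $\Pr(i\in\mathbf W\mid\mathbf Q=Q)$ by conditioning on $l^{*}$ and reading off $\Pr(l^{*}=l\mid\mathbf Q=Q)$ and $\Pr(i\in\mathbf W\mid\mathbf Q=Q,l^{*}=l)$ directly from the protocol's design probabilities, then verifies $D/K$ regime-by-regime; you instead build the posterior explicitly from likelihoods, use the identity $\binom{m}{\mu}\binom{r}{D-\mu}\big/\bigl(\binom{D}{\mu}\binom{M}{m-\mu}\bigr)=m!\,r!/(M!\,D!)$ to collapse the edge-block counting, and reduce everything to the single equation $\beta\mu+(1-\beta)(D-\rho)=mD/(m+2r)$ that the piecewise $\beta$ solves. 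Your route makes rigorous the symmetry the paper tacitly invokes when it asserts $\Pr(l^{*}=l\mid\mathbf Q=Q)$ equals the design value, at the cost of a longer likelihood computation; the paper's route is shorter but leaves that step implicit.
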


\begin{proof}
The rate and the scalar-linearity of the GMPC protocol are obvious from the construction. Clearly, the recoverability condition is also satisfied. 

To prove that the GMPC protocol satisfies the individual privacy condition, we need to show that for any given query $Q$ generated by the protocol, for all $i\in \mathcal{K}$, it holds that \[\Pr(i\in \mathbf{W}|\mathbf{Q}=Q) = \Pr(i\in \mathbf{W})=\frac{D}{K},\] noting that $\mathbf{W}$ is distributed uniformly over $\mathcal{K}_D$.   

Fix an arbitrary $i\in \mathcal{K}$. We consider the following three different cases separately: (i) $\pi^{-1}(i)\in \{1,\dots,m\}$; (ii) $\pi^{-1}(i)\in I_l\setminus \{1,\dots,m\}$ for some $l\in \{1,n\}$; and (iii) $\pi^{-1}(i)\in I_l$ for some $l\not\in \{1,n\}$, where $\pi^{-1}(i) = j$ if and only if $\pi(j)=i$.  

First, consider the case (i). In this case, we have
\begin{align*} 
& \Pr(i\in \mathbf{W}|\mathbf{Q}=Q) \\ 
& \quad = \sum_{l\in \{1,n\}} \Pr(i\in \mathbf{W}, l^{*} = l | \mathbf{Q}=Q)\\
& \quad = \sum_{l\in \{1,n\}} \Pr(l^{*} = l | \mathbf{Q}=Q)\times\Pr(i\in \mathbf{W} | \mathbf{Q}=Q, l^{*}=l)\\
& \quad = 2\left(\frac{1}{2}\times\alpha\left(\beta\times \frac{\binom{m-1}{\mu-1}}{\binom{m}{\mu}}+ (1-\beta)\times\frac{\binom{m-1}{D-\rho-1}}{\binom{m}{D-\rho}} \right)\right)\\
&\quad = 
\begin{cases}
\alpha\beta\left(\frac{D}{m}\right), & D\leq m, D\leq r,\\ 	
\alpha\beta, & D>m, D\leq r,\\
\alpha\left(\beta\left(\frac{D}{m}\right)+(1-\beta)\left(\frac{D-r}{m}\right)\right), & D\leq m, D>r,\\
\alpha\left(\beta + (1-\beta)\left(\frac{D-r}{m}\right))\right), & D>m, D>r,
\end{cases}\\
& \quad = \frac{D}{K},
\end{align*} for our choice of $\beta$ for each range of values of $D,m,r$.   

Next, consider the case (ii). In this case, we have
\begin{align*} 
& \Pr(i\in \mathbf{W}|\mathbf{Q}=Q) \\ 
& \quad = \Pr(i\in \mathbf{W}, l^{*} = l | \mathbf{Q}=Q)\\
& \quad = \Pr(l^{*} = l | \mathbf{Q}=Q)\times \Pr(i\in \mathbf{W} | \mathbf{Q}=Q, l^{*}=l)\\
& \quad = \frac{1}{2}\times\alpha\left(\beta\times \frac{\binom{r-1}{D-\mu-1}}{\binom{r}{D-\mu}}+ (1-\beta)\times\frac{\binom{r-1}{\rho-1}}{\binom{r}{\rho}} \right)\\
&\quad = 
\begin{cases}
\frac{\alpha}{2}(1-\beta)\left(\frac{D}{r}\right), & D\leq m, D\leq r,\\ 	
\frac{\alpha}{2}\left(\beta\left(\frac{D-m}{r}\right)+(1-\beta)\left(\frac{D}{r}\right)\right), & D>m, D\leq r,\\
\frac{\alpha}{2}(1-\beta), & D\leq m, D>r,\\
\frac{\alpha}{2}\left(\beta\left(\frac{D-m}{r}\right) + (1-\beta)\right), & D>m, D>r,
\end{cases}\\
& \quad = \frac{D}{K}, 
\end{align*} for the choices of $\beta$ specified earlier. 

Lastly, consider the case (iii). In this case, we have 
\begin{align*} 
& \Pr(i\in \mathbf{W}|\mathbf{Q}=Q) \\ 
& \quad = \Pr(i\in \mathbf{W}, l^{*} = l | \mathbf{Q}=Q)\\
& \quad = \Pr(l^{*} = l | \mathbf{Q}=Q)\Pr(i\in \mathbf{W} | \mathbf{Q}=Q, l^{*}=l)\\
& \quad = \frac{1}{n-2}\times (1-\alpha)\left(\frac{D}{M+D}\right)\\
& \quad = \left(\frac{M+D}{K-m-2r}\right)\left(\frac{K-m-2r}{K}\right)\left(\frac{D}{M+D}\right)\\
& \quad = \frac{D}{K}.  
\end{align*} This completes the proof. 
\end{proof}

\section{Examples}\label{sec:ex}

\subsection{GMPC Protocol}
This section illustrates two examples of the GMPC protocol for $M=D=2$ and $K\in \{11,12\}$. 

\begin{example}\label{ex:IPC-1}
\emph{Consider a scenario where the server has ${K=12}$ messages $X_1,\dots,X_{12}\in \mathbb{F}_{7}$, and the user demands the linear combination $Z=X_1+3X_2$ with support size $D=2$ and has a coded side information $Y = 5X_3+X_4$ with support size $M=2$. For this example, $W = \{1,2\}$, $V = \{v_1,v_2\} = \{1,3\}$, $S= \{3,4\}$, and ${U =\{u_3,u_4\}= \{5,1\}}$.}

\emph{For this example, the protocol's parameters are as follows: $n =3$, $m=0$, $r=4$, $\alpha = \frac{2}{3}$, $\beta = \frac{1}{4}$, $\mu = 0$ and $\rho=2$.} 

\emph{Let $I_1 = \{1,2,3,4\}$, $I_2 = \{5,6,7,8\}$, and $I_3 = \{9,10,11,12\}$. First, the user constructs a permutation $\pi$ of $\{1,\dots,12\}$ as follows. With probability $\alpha = \frac{2}{3}$, the user randomly chooses $l^{*}\in \{1,3\}$, or with probability $1-\alpha = \frac{1}{3}$, the user chooses $l^{*}=2$. Note that for this example, $l^{*}$ is equally likely to be any of the indices in $\{1,2,3\}$. Suppose that the user chooses $l^{*}=1$. Since, for this example, $\mu=0$, $D-\rho=0$, $m-\mu=0$, and $m-D+\rho=0$, the user randomly assigns all indices in $W\cup S = \{1,2,3,4\}$ to $\{\pi(j): j\in I_{1}\}$; say, $\pi(1) = 2$, $\pi(2)=4$, $\pi(3)=1$, and $\pi(4)=3$. Then, the user randomly assigns the (not-yet-assigned) indices in $\{5,\dots,12\}$ to $\{\pi(j): j\not\in I_1\}$; say, $\pi(5) = 10$, $\pi(6)=8$, $\pi(7)=6$, $\pi(8)=5$, $\pi(9)=11$, $\pi(10)=9$, $\pi(11)=12$, and $\pi(12)=7$. Thus, the permutation $\pi$ maps $\{1,\dots,12\}$ to $\{2,4,1,3,10,8,6,5,11,9,12,7\}$.} 

\emph{Next, the user constructs the ordered sets $Q'_1 =\{\pi(j): j\in I_1\} = \{2,4,1,3\}$, $Q'_2 =\{\pi(j): j\in I_2\} = \{10,8,6,5\}$, and $Q'_3 =\{\pi(j): j\in I_3\} = \{11,9,12,7\}$; and constructs the ordered multiset $Q'' =\{c_{j}: j\in I_{1}\} = \{c_1,c_2,c_3,c_4\} = \{v_2,u_4,v_1,u_3\} = \{3,1,1,5\}$. The user then constructs $Q_1 = (Q'_1,Q'') = (\{2,4,1,3\},\{3,1,1,5\})$, $Q_2 = (Q'_2,Q'') = (\{10,8,6,5\},\{3,1,1,5\})$, and $Q_3 = (Q'_3,Q'') = (\{11,9,12,7\},\{3,1,1,5\})$; and sends the query $Q = \{Q_1,Q_2,Q_3\}$ to the server.} 

\emph{The server computes $A_1 = 3X_2+X_4+X_1+5X_3$, $A_2 = 3X_{10}+X_8+X_6+5X_5$, and $A_3 = 3X_{11}+X_9+X_{12}+5X_7$; and sends the answer $A = \{A_1,A_2,A_3\}$ back to the user. The user then subtracts $Y = 5X_3+X_4$ from $A_{l^{*}}=A_1 = 3X_2+X_4+X_1+5X_3$, and recovers $Z = X_1+3X_2$.} 

\emph{To prove that the individual privacy condition is satisfied in this example, we need to show that the probability of every message $X_i$ to be one of the two messages in $X_W$ is $\frac{2}{12}=\frac{1}{6}$. From the perspective of the server, $l^{*}$ is $1$, $2$, or $3$, each with probability $\frac{1}{3}$. Given $l^{*}=l$, every one of the $6$ pairs of messages in the support of $A_l$ are the two messages in $X_W$ with probability $\frac{1}{6}$. Since every message in the support of $A_l$ belongs to $3$ pairs of messages, the probability of any message in the support of $A_l$ to be one of the two messages in $X_W$ is $3\times\frac{1}{6} = \frac{1}{2}$. Thus, the probability of any message $X_i$ to belong to $X_W$ is $\frac{1}{3}\times \frac{1}{2} = \frac{1}{6}$.} 
\end{example}

\begin{example}\label{ex:IPC-2}
\emph{Consider the scenario in Example~\ref{ex:IPC-1} (i.e., $W=\{1,2\}$, $V=\{1,3\}$, $S=\{3,4\}$, and $U=\{5,1\}$), except when the server has ${K=11}$ messages $X_1,\dots,X_{11}\in \mathbb{F}_{7}$.}

\emph{The protocol's parameters for this example are as follows: $n =3$, $m=1$, $r=3$, $\alpha = \frac{7}{11}$, $\beta = \frac{2}{7}$, $\mu = 1$ and $\rho=2$.} 

\emph{Let $I_1 = \{1,2,3,4\}$, $I_2 = \{5,6,7,8\}$, and $I_3 = \{1,9,10,11\}$. The user first constructs a permutation $\pi$ of $\{1,\dots,11\}$ as follows. With probability $\alpha = \frac{7}{11}$, the user randomly chooses $l^{*}\in \{1,3\}$, or with probability $1-\alpha = \frac{4}{11}$, the user chooses $l^{*}=2$. For this example, $l^{*}$ is equal to $1$, $2$, or $3$, with probability $\frac{7}{22}$, $\frac{4}{11}$, or $\frac{7}{22}$, respectively. Suppose that the user chooses $l^{*}=1$. With probability $\beta = \frac{2}{7}$ (or $1-\beta = \frac{5}{7}$), the user assigns $\mu=1$ randomly chosen index from $W = \{1,2\}$ (or $m-D+\rho = 1$ randomly chosen index from $S = \{3,4\}$), say the index $2$, to $\pi(1)$, i.e., $\pi(1)=2$; and randomly assigns the rest of the indices in $W\cup S = \{1,2,3,4\}$, i.e., $\{1,3,4\}$, to $\{\pi(j): j\in I_{1}\setminus \{1\}\}$; say $\pi(2)=4$, $\pi(3)=1$, and $\pi(4)=3$. Then, the user randomly assigns the (not-yet-assigned) indices in $\{5,\dots,11\}$ to $\{\pi(j): j\not\in I_1\}$; say, $\pi(5) = 10$, $\pi(6)=8$, $\pi(7)=6$, $\pi(8)=5$, $\pi(9)=11$, $\pi(10)=9$, and $\pi(11)=7$. Thus, the permutation $\pi$ maps $\{1,\dots,11\}$ to $\{2,4,1,3,10,8,6,5,11,9,7\}$.} 

\emph{Next, the user constructs the ordered sets $Q'_1 =\{\pi(j): j\in I_1\} = \{2,4,1,3\}$, $Q'_2 =\{\pi(j): j\in I_2\} = \{10,8,6,5\}$, and $Q'_3 =\{\pi(j): j\in I_3\} = \{2,11,9,7\}$; and constructs the ordered multiset $Q'' =\{c_{j}: j\in I_{1}\} = \{c_1,c_2,c_3,c_4\} = \{v_2,u_4,v_1,u_3\} = \{3,1,1,5\}$. The user then constructs $Q_1 = (Q'_1,Q'') = (\{2,4,1,3\},\{3,1,1,5\})$, $Q_2 = (Q'_2,Q'') = (\{10,8,6,5\},\{3,1,1,5\})$, and $Q_3 = (Q'_3,Q'') = (\{2,11,9,7\},\{3,1,1,5\})$; and sends the query $Q = \{Q_1,Q_2,Q_3\}$ to the server.} 

\emph{The server computes $A_1 = 3X_2+X_4+X_1+5X_3$, $A_2 = 3X_{10}+X_8+X_6+5X_5$, and $A_3 = 3X_{2}+X_{11}+X_{9}+5X_7$; and sends the answer $A = \{A_1,A_2,A_3\}$ back to the user. The user then subtracts $Y = 5X_3+X_4$ from $A_{l^{*}}=A_1 = 3X_2+X_4+X_1+5X_3$, and recovers $Z = X_1+3X_2$.} 

\emph{Now, we show that the individual privacy condition is satisfied for this example. We need to verify that every message $X_i$ belongs to $X_W$ with probability $\frac{2}{11}$. From the server's perspective, $l^{*}$ is $1$, $2$, or $3$ with probability $\frac{7}{22}$, $\frac{4}{11}$, or $\frac{7}{22}$, respectively. First, consider the message $X_2$. Given $l^{*} = 1$ (or $l^{*} = 3$), the message $X_2$, which belongs to the support of both $A_1$ and $A_3$, is one of the two messages in $X_W$ with probability $\frac{2}{7}$; whereas for $l^{*} = 2$, the message $X_2$ cannot belong to $X_W$. Thus, the probability of the message $X_2$ to belong to $X_W$ is $2\times \frac{7}{22}\times \frac{2}{7}=\frac{2}{11}$. Now, consider the message $X_1$. Given $l^{*} = 1$, the message $X_1$ belongs to $X_W$ with probability $\frac{2}{7}\times \frac{1}{3}+\frac{5}{7}\times\frac{2}{3} = \frac{4}{7}$. This is because for $X_1$ being one of the two messages in $X_W$ given $l^{*}=1$, either (i) $X_2$ belongs to $X_W$, which has probability $\frac{2}{7}$, and $X_1$ is the other message in $X_W$, which has probability $\frac{1}{3}$ (given $X_2$ belonging to $X_W$), or (ii) $X_2$ does not belong to $X_W$, which has probability $\frac{5}{7}$, and one of the pairs $X_1,X_3$ or $X_1,X_4$ are the two messages in $X_W$, which has probability $\frac{2}{3}$ (given $X_2$ not belonging to $X_W$). Given $l^{*}=2$ or $l^{*}=3$, the message $X_1$ cannot be one of the two messages in $X_W$. Thus, the probability of the message $X_1$ to belong to $X_W$ is $\frac{7}{22}\times \frac{4}{7} = \frac{2}{11}$. Similarly, one can show that any message $X_i$ belongs to $X_W$ with probability $\frac{2}{11}$.} 
\end{example}

\subsection{PC-IA Protocol}
In this section, we give two examples for the PC-IA protocol for $M=D=2$ and $K\in \{11,12\}$. Example~\ref{ex:JPC-1} shows an instance where the PC-IA is a JPC-CSI protocol, whereas Example~\ref{ex:JPC-2} shows an instance for which the PC-IA fails as a JPC-CSI protocol. 

\begin{example}\label{ex:JPC-1}
\emph{Consider the scenario in Example~\ref{ex:IPC-1}, except when joint privacy is required, instead of individual privacy.} 

\emph{The protocol's parameters for this example are as follows (for details, see~\cite{HS2019PC}): ${s = 2}$, $n=5$, $m = 6$, $r = 0$, and $t = 1$, and $\{x_1,x_2,x_3,x_4,x_5,y_0,y_1\}=\{0,1,\dots,6\}$.} 

\emph{First, the user creates $m=6$ ordered sets $B_1,\dots,B_6$, where $B_j=\{-,-\}$ for all $j$, i.e., $B_j$ has two slots to be filled with elements from $\{1,\dots,12\}$. The user then randomly places the $D=2$ indices in $W=\{1,2\}$ into two slots; say, $B_2=\{2,-\}$, $B_3=\{-,1\}$, and $B_1, B_4, B_5, B_6$ remain empty. Since $B_2$ and $B_3$ contain some indices from $W$, the user fills $B_2$ and $B_3$, each with a randomly chosen index from $S=\{3,4\}$; say, $B_2=\{2,4\}$, and $B_3=\{3,1\}$. Next, the user randomly places the remaining indices $5,\dots,12$ into the remaining slots, and fills $B_1, B_4,B_5,B_6$; say $B_1=\{11,6\}$, $B_4=\{10,12\}$, $B_5 = \{8,9\}$, and $B_6=\{7,5\}$.} 

\emph{The user then constructs $n=5$ ordered sets $Q_1,\dots,Q_5$, where $Q_i = \{B_1,B_{1+i}\}$. That is, $Q_1 = \{11,6,2,4\}$, $Q_2=\{11,6,3,1\}$, $Q_3=\{11,6,10,12\}$, $Q_4=\{11,6,8,9\}$, and $Q_5 = \{11,6,7,5\}$. Next, the user creates $n=5$ ordered multisets $Q'_1,\dots,Q'_5$, defined as $Q'_i=\{C_{i,1},C_{i,1+i}\}$, where $C_{i,1} = \{\alpha_{1,1}\omega_{i,1},\alpha_{1,2}\omega_{i,1}\}$ and $C_{i,1+i}=\{\alpha_{1+i,1},\alpha_{1+i,2}\}$, where $\omega_{i,j}= {1}/{(x_i-y_j)}$; and the values of $\alpha_{j,k}$'s are specified shortly. For this example, the user constructs $Q'_1=\{2,5,3,1\}$, $Q'_2=\{1,6,1,3\}$, $Q'_3=\{3,1,3,3\}$, $Q'_4=\{4,3,4,1\}$, and $Q'_5 = \{6,1,3,2\}$.} 

\emph{The procedure for choosing $\alpha_{j,k}$'s is described below. First, the user finds: (i) the set $J$ of indices $j$ such that $B_j$ contains some indices from $W$; (ii) the minimal set $I$ (with highest lexicographical order) of indices of $Q_i$'s such that $\cup_{i\in I} Q_i$ contains all indices in $W$; and (iii) the set $H$ of $|I|-1 = 1$ largest indices in $\{1,\dots,t\}\setminus J$; for this example, $J=\{2,3\}$, $I=\{1,2\}$, and $H=\{1\}$. Then, the user forms the matrix $T=(\omega_{i,j})_{i\in I,j\in H}=[\omega_{1,1},\omega_{2,1}]^{\mathsf{T}} = [1,4]^{\mathsf{T}}$, and chooses $c_1=1$ and $c_2=-{\omega_{1,1}}/{\omega_{2,1}}=5$ such that $[c_1,c_2]\cdot T=0$. The user then selects $\alpha_{j,k}$'s as follows: for $j\in J = \{2,3\}$ and $k\in \{1,\dots,s\}=\{1,2\}$ such that the $k$th element of $B_j$, say, $l$, belongs to $W$ (or $S$), the user selects $\alpha_{j,k}=v_{l}/\sum_{i\in I} c_i\omega_{i,j}$ (or $\alpha_{j,k}=u_{l}/\sum_{i\in I} c_i\omega_{i,j}$) if $1\leq j\leq t=1$, and selects $\alpha_{1+i,k}=v_{l}/c_i$ (or $\alpha_{1+i,k}=u_{l}/c_i$), where $v_{l}$ (or $u_l$) is the coefficient of the message $X_{l}$ in the user's demand $Z$ (or side information $Y$). For this example, $\alpha_{2,1} = v_2/c_1 = 3$ and $\alpha_{2,2}=u_4/c_1 = 1$ (the first element in $B_2$ and the second element in $B_3$ are the demand support indices $2$ and $1$, respectively); and $\alpha_{3,1} = u_3/c_2 = 1$ and $\alpha_{3,2} = v_1/c_2 = 3$ (the second element in $B_2$ and the first element in $B_3$ are the side information support indices $4$ and $3$, respectively). The user selects the rest of $\alpha_{j,k}$'s from $\mathbb{F}_q\setminus \{0\} = \{1,\dots,6\}$ at random; say, $\alpha_{1,1}=2$, $\alpha_{1,2}=5$, $\alpha_{4,1}=3$, $\alpha_{4,2}=3$, $\alpha_{5,1}=4$, $\alpha_{5,2}=1$, $\alpha_{6,1}=3$, and $\alpha_{6,2}=2$. These choices of $\alpha_{j,k}$'s yield the ordered multisets $Q'_1,\dots,Q'_5$ defined earlier.}

\emph{The user then sends to the server
 \begin{align*}
(Q_1,Q'_1) &=(\{11,6,2,4\},\{2,5,3,1\}),\\
(Q_2,Q'_2) &=(\{11,6,3,1\},\{1,6,1,3\}),\\
(Q_3,Q'_3) &=(\{11,6,10,12\},\{3,1,3,3\}),\\
(Q_4,Q'_4) &=(\{11,6,8,9\},\{4,3,4,1\}),\\
(Q_5,Q'_5) &=(\{11,6,7,5\},\{6,1,3,2\}),
\end{align*} and the server sends the user back 
\begin{align*}
A_1 &=2X_{11}+5X_{6}+3X_2+X_4,\\
A_2 &=X_{11}+6X_{6}+X_3+3X_1,\\
A_3 &=3X_{11}+X_{6}+3X_{10}+3X_{12},\\
A_4 &= 4X_{11}+3X_{6}+4X_{8}+X_9,\\
A_5 &= 6X_{11}+X_{6}+3X_{7}+2X_5.
\end{align*} Then, the user computes $c_1A_1+c_2A_2=A_1+ 5A_2= X_1+3X_2+5X_3+X_4$; and subtracting off the contribution of $Y = 5X_3+X_4$, recovers $Z = X_1+3X_2$.} 

\emph{To show that the joint privacy condition is satisfied in this example, we need to prove that any pair of messages is equally likely to be in $X_W$. As an example, consider the pair of messages $X_1$ and $X_2$. According to the supports of $A_1$ and $A_2$, $X_1$ and $X_2$ belong to $X_W$ if and only if $X_3$ and $X_4$ are the two messages in $X_S$. This is because $X_1$ belongs only to the support of $A_2$, and $X_2$ belongs only to the support of $A_1$; and by the protocol, one of the messages in $X_S$ (in this case, $X_3$) must be paired with $X_1$, and the other message in $X_S$ (in this case, $X_4$) must be paired with $X_2$. Note that $X_6$ and $X_{11}$ are aligned in $A_1$ and $A_2$; and they can be canceled by linearly combining $A_1$ and $A_2$. Moreover, there exists a unique such linear combination of $A_1$ and $A_2$, i.e., $c_1A_1+c_2A_2$, where the coefficient of $A_1$, i.e., $c_1$, is equal to $1$. Note that by the protocol, the coefficient of the least-indexed $A_i$ in the linear combination of $A_i$'s being used by the user in the recovery process (in this case, the coefficient $c_1$ of $A_1$ in the linear combination $c_1A_1+c_2A_2$) is always chosen to be equal to $1$.} 

\emph{Next, consider a different pair of messages, say, $X_2$ and $X_6$. According to the support of $A_1$, the messages $X_2$ and $X_6$ belong to $X_W$ if and only if the messages $X_{11}$ and $X_4$ belong to $X_S$. In this case, $A_1$ is the unique linear combination of $A_i$'s (with the least-indexed $A_i$ having coefficient $1$) whose support contains $X_2,X_6,X_{11},X_{4}$. As another example, consider the pair of messages $X_6$ and $X_{11}$. By the protocol, $X_{11}$ and $X_6$ belong to $X_W$ if and only if $X_2$ and $X_4$ belong to $X_S$. This is because $A_1$ is the least-indexed $A_i$ whose support contains $X_{11}$ and $X_6$.} 

\emph{Similarly as above, one can verify that from the perspective of the server, there is a unique way to recover a linear combination of any two messages. This observation, together with the fact that by the protocol the two messages in $X_W$ were placed randomly in the support of $A_i$'s, show that any pair of messages is equally likely to be in $X_W$.}
\end{example}

\begin{example}\label{ex:JPC-2}
\emph{Consider the scenario in Example~\ref{ex:IPC-2}, except when, instead of individual privacy, joint privacy is required.}

\emph{Following the procedure in the PC-IA protocol, without specifying the choice of nonzero coefficients (denoted by `$*$'), the server's answer to the user's query will have one of the following structures (up to a permutation of $1$ and $2$; a permutation of $3$ and $4$; and a permutation of $5,\dots,11$):}

\begin{enumerate}
\item[] Case (i):
\begin{align*}
A_1 &=*X_{5}+*X_{1}+*X_3,\\
A_2 &=*X_{5}+*X_{2}+*X_{4},\\
A_3 &=*X_{5}+*X_{6}+*X_{7},\\
A_4 &= *X_{5}+*X_{8}+*X_{9},\\
A_5 &= *X_{5}+*X_{10}+*X_{11}.
\end{align*}
\item[] Case (ii): 
\begin{align*}
A_1 &=*X_{3}+*X_{1}+*X_2,\\
A_2 &=*X_{3}+*X_{4}+*X_{5},\\
A_3 &=*X_{3}+*X_{6}+*X_{7},\\
A_4 &= *X_{3}+*X_{8}+*X_{9},\\
A_5 &= *X_{3}+*X_{10}+*X_{11}.
\end{align*}
\item[] Case (iii): 
\begin{align*}
A_1 &=*X_{1}+*X_{2}+*X_3,\\
A_2 &=*X_{1}+*X_{4}+*X_{5},\\
A_3 &=*X_{1}+*X_{6}+*X_{7},\\
A_4 &= *X_{1}+*X_{8}+*X_{9},\\
A_5 &= *X_{1}+*X_{10}+*X_{11}.
\end{align*} 	
\end{enumerate}

\emph{In either of these cases, one can easily verify that for any $D=2$ messages, say, $X_{i_1}$ and $X_{i_2}$, there exists a linear combination of $A_i$'s whose support includes $X_{i_1}$ and $X_{i_2}$, and has size at least $D=2$ and at most $M+D=4$. (Recall that this property is required for any scalar-linear JPC-SI protocol.) For instance, consider the two messages $X_1$ and $X_2$. In the case (i), by linearly combining $A_1$ and $A_2$ in such a way that $X_5$ is canceled, one can recover a linear combination of $X_1$, $X_2$, $X_3$, and $X_4$. In both cases (ii) and (iii), the support of $A_1$ contains $X_1$ and $X_2$, and has size $3$ ($<4$). On the other hand, for a scalar-linear JPC-CSI protocol, a stronger requirement needs to be satisfied: for any two messages $X_{i_1}$ and $X_{i_2}$, there must exist a linear combination of $A_i$'s whose support includes $X_{i_1}$ and $X_{i_2}$, and has size exactly equal to $M+D=4$. For the two messages $X_1$ and $X_2$, however, neither case (ii) nor (iii) satisfies the underlying requirement.} 
\end{example}

\bibliographystyle{IEEEtran}
\bibliography{PIR_salim,pir_bib}

\end{document}